\newtheorem{lemma}{Lemma}
\theoremstyle{definition}
\newtheorem{definition}{Definition}
\renewcommand*\env@matrix[1][\arraystretch]{%
  \edef\arraystretch{#1}%
  \hskip -\arraycolsep
  \let\@ifnextchar\new@ifnextchar
  \array{*\c@MaxMatrixCols c}}
\newcommand{\round}{\textup{\textsf{round}}}
\newcommand{\var}{\mathit{var}}
\newcommand{\reduce}{\textup{\textsf{reduce}}}
\newcommand{\union}{\textup{\textsf{union}}}
\newcommand{\estimateAndSample}{\textup{\textsf{estimateAndSample}}}
\newcommand{\approxMCnFBDDcore}{\textup{\textsf{approxMCnFBDD\_core}}}
\newcommand{\approxMCnFBDD}{\textup{\textsf{approxMCnFBDD}}}
\newcommand{\children}{\mathsf{children}}
\newcommand{\descendants}{\mathsf{desc}}
\newcommand{\Ex}{\textup{E}}
\newcommand{\Va}{\textsf{Var}}
\newcommand{\calP}{\mathcal{P}}
\newcommand{\calH}{\mathcal{H}}
\newcommand{\calS}{\mathcal{S}}
\newcommand{\paths}{\mathsf{path}}
\newcommand{\lcpn}{\mathsf{lcpn}}
\newcommand{\median}{\mathsf{median}}
\newcommand{\mods}{\mathsf{mod}}
\newcommand{\vY}[4]{\mathfrak{S}^{#1}_{#2,#3}(#4)}
\newcommand{\vZ}[3]{\hat{\mathfrak{S}}^{#1}_{#2}(#3)}
\newcommand{\pUnion}{sample\_union}
\newcommand{\pEstimate}{compute\_estimate}
\newcommand{\pReduce}{sample\_reduce}
\title{An FPRAS for Model Counting for Non-Deterministic Read-Once Branching Programs \thanks{
	The authors decided to forgo the old convention of alphabetical
	ordering of authors in favor of a randomized ordering, denoted by \textcircled{r}. The publicly verifiable record of the randomization is available
	at \protect\url{https://www.aeaweb.org/journals/policies/random-author-order/search}
}}
\date{}
\author{Kuldeep S. Meel$^1$ \textcircled{r} Alexis de Colnet$^2$\\ \medspace
	
	$^1$University of Toronto, Canada\\
	$^2$TU Wien, Austria
}
\begin{document}

\maketitle

\begin{abstract}
Non-deterministic read-once branching programs, also known as non-deterministic free binary decision diagrams (nFBDD), are a fundamental data structure in computer science for representing Boolean functions. In this paper, we focus on \#nFBDD, the problem of model counting for non-deterministic read-once branching programs. The \#nFBDD problem is \#P-hard, and it is known that there exists a quasi-polynomial randomized approximation scheme for \#nFBDD. In this paper, we provide the first FPRAS for \#nFBDD. Our result relies on the introduction of new analysis techniques that focus on bounding the dependence of samples.
\end{abstract}

\section{Introduction}\label{section:introduction}

Read-once branching programs or binary decision diagrams are fundamental data structures in computer science used to represent Boolean functions. Their variants have been discovered multiple times across various sub-fields of computer science, and consequently, they are referred to by many acronyms~\cite{Akers78,Bryant86,Wegener00}. In this paper, we focus on non-deterministic read-once branching programs, also known as non-deterministic free binary decision diagrams (nFBDD).

We study the following computational problem:

\noindent {\bfseries \#nFBDD}: Given a non-deterministic read-once branching program $B$ over a Boolean set of variables $X$, compute the number of models of $B$, i.e., the number of assignments over $X$ that $B$ maps to $1$.

From a database perspective, \#nFBDD is an important problem owing to the recent connections between query evaluation and knowledge compilation~\cite{JhaS12,JhaS13,MonetO18,Monet20,AmarilliC24,amarilli2024circus}. The field of knowledge compilation has its origins in the artificial intelligence community, where functions represented in input languages are compiled into target languages that can support  queries tractably (often viewed as polynomial time)~\cite{DarwicheM02}. The typical queries of interest are  satisfiability, entailment,  enumeration, and counting. 

The target languages in the context of databases have been variants of binary decision diagrams, also referred to as branching programs, and circuits in decomposable negation normal form (DNNF)~\cite{AmarilliC24,amarilli2024circus}. A binary decision diagram is a representation of a Boolean function as a directed acyclic graphs where the nodes correspond to variables and the sinks correspond to values, i.e., $0$ or $1$. One of the most well-studied forms is the ordered binary decision diagram (OBDD), where the nodes correspond to variables and, along every path from root to leaf, the variables appear in the same order~\cite{Bryant86}. A generalization of OBDD is nOBDD, where internal nodes can also represent disjunction ($\vee$) gates. 

nFBDD are a natural generalization of nOBDDs, as they do not impose restrictions on the ordering of Boolean variables. Since nFBDD do not impose such restrictions, they are known to be exponentially more succinct than nOBDD; that is, there exist functions for which the smallest nOBDD is exponentially larger than the smallest nFBDD~\cite{AmarilliCMS20}. From this viewpoint, nFBDDs occupy a space between nOBDD and DNNF circuits, as they are exponentially more succinct than nOBDDs, while DNNFs are quasi-polynomially more succinct than nFBDD~\cite{BL15,AmarilliCMS20}.

 In the context of databases, the connection between knowledge compilation and query evaluation has been fruitful, leading to the discovery of both tractable algorithms and lower bounds. Of particular note is the application of the knowledge compilation paradigm in query evaluation on probabilistic databases~\cite{JhaS13}, Shapley value computation~\cite{DeutchFKM22},  the enumeration of query answers, probabilistic graph homomorphism~\cite{AvBM24}, counting answers to queries~\cite{ACJR21}. The knowledge compilation-based approach involves first representing the database task as a query over a Boolean function and then demonstrating that the corresponding Boolean function has a tractable representation in a given target language, which also supports the corresponding query in polynomial time~\cite{AC24}.
 For example, in the context of query evaluation over probabilistic databases, one can show that the problem of probabilistic query evaluation can be represented as a weighted model counting problem over nOBDD when the underlying query is a path query~\cite{AvBM24}. Since there is a fully polynomial-time randomized approximation scheme (FPRAS) for the problem of model counting over nOBDD~\cite{ACJR19}, it follows that probabilistic query evaluation for regular path queries over tuple-independent databases admits an FPRAS~\cite{AvBM24}.  In the context of aggregation tasks, the underlying query is often model counting and its variants~\cite{mengel2021counting}. 
 
 The aforementioned strategy makes it desirable to have target languages that are as expressive as possible while still supporting queries such as counting in polynomial time. In this context, the recent flurry of results has been enabled by the breakthrough of Arenas, Croquevielle, Jayaram, and Riveros, who showed that the problem of \#nOBDD admits an FPRAS~\cite{ACJR19}. As mentioned earlier, nOBDD imposes a severe restriction on variable ordering, i.e., along every path from root to leaf, the variable ordering remains the same. nFBDD generalizes nOBDD by alleviating this restriction, thereby enabling succinct representations for several functions that require exponentially large nOBDD. Since nFBDD generalize nOBDD, the \#P-hardness of  \#nFBDD, the problem of model counting over nFBDD, immediately follows.  Accordingly,   in light of the recent discovery of FPRAS for \#nOBDD, an important open question is whether there exists an FPRAS for \#nFBDD. The best known prior result provides a quasi-polynomial time algorithm owing to reduction of nFBDD to DNNF (and accordingly $(+,\times)$-programs)~\cite{GJKSM97}.  As noted in Section~\ref{sec:relatedwork},  the techniques developed in the context of design of FPRAS for \#nOBDD do not extend to handle the case for \#nFBDD and therefore, design of FPRAS for \#nFBDD would require development of new techniques.

 The primary contribution of our work is to answer the aforementioned question affirmatively, which is formalized in the following theorem.

\begin{restatable}{theorem}{mainResult}\label{theorem:main_result}
	Let $B$ be an nFBDD over $n$ variables, $\varepsilon > 0$ and $\delta > 0$. 
	Algorithm $\approxMCnFBDD(B,\varepsilon,\delta)$ runs in time $O(n^{11}|B|^{10}\varepsilon^{-4}\log(\delta^{-1}))$ and returns $\mathsf{est}$ with the guarantee that 
	$$
		\Pr\left[\mathsf{est}  \in (1 \pm \varepsilon) |B^{-1}(1)| \right] \geq 1 - \delta.
	$$
\end{restatable}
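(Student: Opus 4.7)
The plan is to extend the recursive estimator–sampler paradigm that Arenas--Croquevielle--Jayaram--Riveros introduced for \#nOBDD to the more general nFBDD setting. At the heart of the algorithm will be a primitive $\estimateAndSample(v)$ which, given a node $v$ of $B$, returns simultaneously an unbiased estimator of the number of models of the sub-nFBDD rooted at $v$ together with an (approximately) uniform random sample from those models. Processing the DAG in reverse topological order, $\approxMCnFBDDcore$ builds cached values at each node from those at its children: a decision node splits cleanly into the $0$-branch and the $1$-branch (scaled appropriately by powers of $2$ for the unresolved variables), while a non-deterministic ($\vee$-)node requires an inclusion–exclusion-style correction to avoid double-counting models witnessed by several children. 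The operations $\union$ and $\reduce$ on the cached pools of samples are designed precisely to estimate these overlaps empirically, and $\updateCache{\cdot}$ and $\computeCache{\cdot}$ propagate the information upward.

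The main technical obstacle, and the source of the ``bounding the dependence of samples'' ingredient announced in the abstract, is the variance control for $\vee$-nodes. Unlike in nOBDD, where a fixed variable ordering cleanly separates the subproblems faced at any two children of a $\vee$-node, in an nFBDD two such children may share arbitrary descendants; samples drawn through them can therefore reuse the same internal random bits and become correlated. I plan to make this precise by binding every random decision inside $\estimateAndSample$ to the node at which it is made, so that any two cached samples possess a well-defined \emph{shared history}, namely the set of ancestors in the recursion tree whose random bits both samples depend on. A second-moment analysis of $\approxMCnFBDDcore$ then reduces to summing, over all pairs of cache entries, a covariance contribution controlled by the expected size of the shared history; the crux will be a combinatorial lemma that uses the read-once property of $B$ and the structure of $\paths$ and $\histories$ to bound this sum by a polynomial in $|B|$ and the cache size $M$.

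Once the second-moment bound is in place, the rest is routine. Choosing $M = \mathrm{poly}(|B|,n)\cdot\varepsilon^{-2}$ and applying Chebyshev's inequality turns a single run of $\approxMCnFBDDcore$ into an $(1\pm\varepsilon)$-approximation with constant probability; the standard $\median$-of-means amplification over $O(\log\delta^{-1})$ independent copies then boosts the confidence to $1-\delta$, which is $\approxMCnFBDD$. The per-call cost of $\estimateAndSample$, $\union$, $\reduce$, and the $\cache$ maintenance is polynomial in $|B|$ and $M$, and together with the sample complexity yields the announced $O(n^{5}\varepsilon^{-4}\log(\delta^{-1})|B|^{6}\log|B|)$ running time. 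The genuinely hard step is the covariance bound of the second paragraph: a naive pairwise union bound over cache entries loses a quasi-polynomial factor—exactly where the previous reduction through DNNF stops—and driving that factor down to a polynomial is what the new shared-history analysis is there to accomplish.
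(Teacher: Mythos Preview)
Your high-level plan matches the paper's: bottom-up $\estimateAndSample$, median-of-means amplification, and a second-moment bound on the correlated samples at $\lor$-nodes as the crux. Two points deserve sharpening, one of which is a genuine gap.

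First, your ``shared history'' idea is morally the paper's \emph{derivation path} and \emph{last common prefix node} ($\lcpn$), but you phrase the covariance bound as being controlled by the \emph{expected size} of the shared history. That is not the right quantity. The paper shows that for $\alpha\neq\alpha'$ the conditional probability $\Pr[\alpha\in S^r(q)\mid\alpha'\in S^r(q)]$ is bounded by $p(q)/p(q^*)$ where $q^*=\lcpn(\paths(\alpha,q),\paths(\alpha',q))$, and then proves the combinatorial bound $|\{\alpha':\lcpn=q_\ell\}|\leq |\mods(q)|/|\mods(q_\ell)|$ (Lemma~\ref{lemma:derivation_path}). Summing these contributions along the length-$O(n)$ derivation path is what gives $\Va\leq O(n)\mu^2$. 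Your sketch does not yet isolate this pair of facts, and ``expected size of the shared history'' is not the right invariant.

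Second, and more seriously, you do not address the fact that $p(q)$ is itself a random variable. The statement ``$\Pr[\alpha\in S^r(q)]=p(q)$'' that you implicitly rely on is ill-typed: the left side is a number, the right side is random. The paper spends all of Section~\ref{section:random_process} on a coupling argument (the ``random process'' with variables $\vY{r}{h}{t}{q}$ indexed by \emph{histories} $h$) precisely to make this rigorous: one conditions on the full history of $p$-values below $q$, replaces $S^r(q)$ by a surrogate variable that is independent of that history, and only then can the variance computation go through. Without this step, a direct second-moment analysis of $\approxMCnFBDDcore$ does not make sense, because the events you want to bound and the thresholds you compare them to are entangled. This is the missing idea in your plan.

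A minor remark: your ``scaled by powers of $2$ for the unresolved variables'' at decision nodes is not how the paper proceeds; it first makes $B$ $1$-complete so that every accepting path mentions every variable, and then $p(q)=(1/p(q_0)+1/p(q_1))^{-1}$ with no extra scaling.
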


\paragraph*{Organization of the paper.} We start with background on nFBDD in Section~\ref{section:background}. The different components of the FPRAS are described in Section~\ref{section:algorithm} and the analysis is split in three parts: in Section~\ref{section:derivation_path} we introduce the key concept of derivation paths, in Section~\ref{section:random_process} we describe the particular framework for the analysis, and in Section~\ref{section:analysis} we go through the proof of the FPRAS guarantees.

\section{Background}\label{section:background}

Given a positive integer $n$ and $m$ an integer less than $n$, $[n]$ denotes the set $\{1,2,\dots,n\}$ and $[m,n]$ the set  $\{m,m+1,\dots,n\}$. For $a$, $b$ and $\varepsilon$ three real numbers with $\varepsilon > 0$, we use $a  \in (1 \pm \varepsilon)b$ to denote $(1 - \varepsilon)b \leq a \leq  (1 + \varepsilon)b$, similarly, $a  \in \frac{b}{1 \pm \varepsilon}$ stands for $\frac{b}{1 + \varepsilon}\leq a \leq  \frac{b}{1 - \varepsilon}$. We sometimes uses the special value $\infty$ with the convention that $\frac{1}{\infty}$ equals $0$ and that $\frac{1}{0}$ equals $\infty$.

Boolean variables take value $0$ (\emph{false}) or $1$ (\emph{true}). An assignment $\alpha$ to a set $X$ of Boolean variables is mapping from $X$ to $\{0,1\}$. We sometimes see $\alpha$ as a set $\{x \mapsto \alpha(x) \mid x \in X\}$. We denote by $\alpha_\emptyset$ the empty assignment, which corresponds to the empty set. The set of assignments to $X$ is denoted $\{0,1\}^X$. A Boolean function $f$ over $X$ is a mapping $\{0,1\}^X$ to $\{0,1\}$. The \emph{models} of $f$ are the assignments mapped to $1$ by $f$. when not explicit, the set of variables assigned by $\alpha$ is denoted by $var(\alpha)$. When $var(\alpha') \cap var(\alpha) = \emptyset$, we denote by $\alpha \cup \alpha'$ the assignment to $var(\alpha') \cup var(\alpha)$ consistent with both $\alpha$ and $\alpha'$. For $S$ and $S'$ two sets of assignments, we write $S \otimes S' = \{ \alpha \cup \alpha' \mid \alpha \in S,\, \alpha' \in S'\}$.

\paragraph*{nBDD.}

A \emph{binary decision diagram} (BDD) is a directed acyclic graph (DAG) with a single source node $q_\text{source}$, two \emph{sinks} labeled $0$ and $1$, and where each internal node is labeled by a Boolean variable $x$ and has two outgoing edges: the $0$-edge going to the $0$-child $q_0$ and and the $1$-edge going to the $1$-child $q_1$. Internal nodes are called \emph{decision nodes} and are written $ite(x,q_1,c_0)$ (if $x$ then $q_1$ else $q_0$). Note that $q_0$ and $q_1$ may be equal. A path in the DAG contains a variable $x$ if it contains a node $ite(x,q_1,q_0)$. Every variables assignment $\alpha$ corresponds to the unique path that starts from the source and, on a decision node $ite(x,q_1,q_0)$ follows the $\alpha(x)$-edge. \emph{Non-deterministic} BDD (nBDD) also admit \emph{guess nodes}: unlabeled nodes with arbitrarily many children. When a path for an assignment reaches a guess node it can pursue to any child, so several paths can correspond to the same assignment in an nBDD. For $q$ a node in an nBDD $B$, $var(q)$ denotes the set of variables labeling decision nodes reachable from $q$ (including $q$). We note $var(B) = var(q_\text{source})$. $B$ computes a Boolean function over $var(B)$ whose models are the assignments for which at least one path reaches the $1$-sink. Every node $q$ of $B$ is itself the source of an nBDD and therefore represents a function over $var(q)$ whose set of models we note $\mods(q)$. So $B^{-1}(1) = \mods(q_\text{source})$. The function computed by an nBDD is also that computed by the circuit obtained replacing every decision node $ite(x,q_1,q_0)$ by $(\neg x \land q_0) \lor (x \land q_1)$ and every guess node with children $q_1,\dots,q_k$ by $q_1 \lor \dots \lor q_k$. Thus, we call $\lor$-nodes the guess nodes in this paper. The size of an nBDD $B$, denoted by $|B|$, is its number of edges. 

\paragraph*{nFBDD.}

An nBDD is \emph{free} (nFBDD) when every path contains every variable at most once. There are also called in the literature \emph{read-once non-deterministic branching programs} (1-NBP). Note that in an nFBDD, variables may appear in different order depending on the path. When the order of occurrence of the variables is consistent among all paths of the nFBDD we say that we have an \emph{ordered} nBDD (nOBDD). We call an nFBDD \emph{$1$-complete} when along every path from the source to the $1$-sink, every variable occurs exactly once. We call an nFBDD \emph{$0$-reduced} when it contains no decision nodes $ite(x,0\text{-sink},0\text{-sink})$ and no $\lor$-nodes that have the $0$-sink among their children. Technically, $0$-reduced nFBDD cannot represent functions with no models, but these functions are not considered in this paper. An nFBDD is \emph{alternating} when its source node is a $\lor$-node, and when every $\lor$-node only has decision nodes for children, and when every decision node has only $\lor$-nodes and sinks for children. Every $n$-variables nFBDD $B$ can be made $1$-complete, $0$-reduced and alternating in polynomial time without modifying the function it computes just by adding nodes of the form $ite(x,q,q)$ or $\lor$-nodes with a single child. 

\begin{restatable}{lemma}{makeComplete}\label{lemma:makeComplete}
Every nFBDD $B$ over $n$ variables can be made $1$-complete, $0$-reduced, and alternating in time $O(n|B|^2)$.
\end{restatable}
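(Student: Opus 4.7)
The plan is to apply three successive passes to $B$---$0$-reduction, alternation, and $1$-completion---each preserving the function computed and the invariants established by earlier passes. Apart from deletions and edge rerouting, the only new nodes introduced are dummies of the form $ite(x,q,q)$ and $\lor$-nodes with a single child, as hinted by the text preceding the lemma.

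For $0$-reduction, I would traverse $B$ bottom-up, replacing every decision node $ite(x, 0\text{-sink}, 0\text{-sink})$ by the $0$-sink and removing the $0$-sink from the children of any $\lor$-node (collapsing an $\lor$-node to the $0$-sink when all its children are deleted); the cascade is handled in $O(|B|)$ time. For alternation, I would wrap the source in a single-child $\lor$-node if it is not already an $\lor$-node, insert a fresh single-child $\lor$-node on every decision-to-decision edge, and eliminate every $\lor$-to-$\lor$ edge $(p,c)$ by replacing it with edges from $p$ to each child of $c$, which is valid by associativity of disjunction. Any remaining $\lor$-to-$1$-sink edges are left in place and resolved by the next pass.

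For $1$-completion I would first compute $\var(q)$ bottom-up for every node, stored as an $n$-bit vector. Then, still bottom-up, for each node $q$ with an outgoing edge to $c$ I compute the target set $V_q$ (equal to $\bigcup_{c'}\var(c')$ if $q$ is an $\lor$-node, and to $\var(q_1)\cup\var(q_0)$ if $q = ite(x,q_1,q_0)$) and the missing set $M := V_q \setminus \var(c)$. If $M \neq \emptyset$, I splice into $(q,c)$ an alternating chain of decision dummies $ite(y,\cdot,\cdot)$ for $y \in M$ and single-child $\lor$-nodes that ends at $c$; when $q$ is an $\lor$-node and $c$ is the $1$-sink, the last dummy of the chain takes the $1$-sink directly as both children (a permitted decision-to-sink configuration), which simultaneously eliminates the offending $\lor$-to-$1$-sink edge. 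A bottom-up induction then shows that every $q$-to-$1$-sink path in the resulting graph decides exactly the variables in $\var(q)$; applied at the source this yields $1$-completeness, while the padding chain construction preserves $0$-reduction and alternation.

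The main obstacle is ensuring that the inserted dummies keep the graph free, i.e.\ no variable is decided twice on the same path: every padding variable $y$ added on an edge $(q,c)$ lies in $\var(q) \setminus \var(c)$, and the freeness of the original $B$ forces no variable of $\var(q)$ to be decided on any source-to-$q$ path, while $y \notin \var(c)$ rules out any re-decision inside $c$'s subtree. For the complexity, computing all $\var$-sets costs $O(n|B|)$; iterative $\lor$-merging in the alternation pass can leave up to $O(|B|^2)$ edges in the graph; and each of those edges receives $O(n)$ padding nodes in the $1$-completion pass, giving an overall bound of $O(n|B|^2)$.
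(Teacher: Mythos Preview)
Your plan mirrors the paper's proof sketch: $0$-reduce, collapse $\lor$-to-$\lor$ edges and insert single-child $\lor$-nodes for alternation, then pad with $ite(y,\cdot,\cdot)$ dummies for $1$-completeness, all within the $O(n|B|^2)$ budget; the only cosmetic difference is that you interleave the single-child $\lor$-insertions with the padding chain, whereas the paper defers them to a separate fourth pass after $1$-completion.

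Two edge cases slip through your description, however. First, your rule pads \emph{every} edge $(q,c)$ with $M\neq\emptyset$, which includes a decision-to-$0$-sink edge whenever the sibling child has variables; the tail of such a chain is $ite(y,0\text{-sink},0\text{-sink})$, so the claim that ``the padding chain construction preserves $0$-reduction'' fails. The fix is trivial---skip edges into the $0$-sink, since $1$-completeness concerns only paths to the $1$-sink. Second, your resolution of $\lor$-to-$1$-sink edges relies on there being a ``last dummy of the chain'', i.e.\ on $M\neq\emptyset$; an $\lor$-node $q$ with $\var(q)=\emptyset$ (only child the $1$-sink) gets $M=\emptyset$, so no chain is inserted and the offending edge survives, violating alternation. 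The paper sidesteps this by, before the $\lor$-merge, replacing every $\lor$-node that has the $1$-sink among its children by the $1$-sink itself.
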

\begin{proof}[Proof sketch]
First, we make $B$ $0$-reduced. To this end, repeat the following until reaching a fixed point: replace all nodes $ite(x,0\text{-sink},0\text{-sink})$ by the $0$-sink, remove the $0$-sink from $\lor$-nodes' children, and replace all $\lor$-nodes with no child by the $0$-sink. Doing these changes bottom-up in $B$ takes time $O(|B|)$ and results in a $0$-reduced nFBDD $B'$ with $|B'| \leq |B|$.

Second, we make $B'$ alternating with respect to the $\lor$-nodes. First, replace every $\lor$-node that have the $1$-sink as a child by the $1$-sink. After that, no sink is a child of any $\lor$-node. Next, for every $\lor$-node $q$ in $B'$, if $q$ has a parent $q'$ that is a $\lor$-node, then remove $q$ from $q'$'s children and add all of $q$'s children to $q'$'s children. Doing these changes bottom-up yields an nFBDD $B''$ whose $\lor$-nodes all have only decision nodes as children. We remove from $B''$ nodes that have no parent and are not the source node $B'$. $B''$ does not have more number of nodes than $B'$ but can have more edges. The number of children of each $\lor$-node is increased by at most $|B'|$ so constructing $B''$ takes time is $O(|B'|^2) = O(|B|^2)$. $B''$ is still $0$-reduced.

Third, we make $B'$ $1$-complete. For every node $q \in B''$, let $\children(q) = (q_1,\dots,q_k)$. While there exists $i \in [k]$ such that $var(q_i) \neq var(q)$, choose $x \in var(q) \setminus var(q_i)$ and replace $q_i$ by the node $ite(x,q_i,q_i)$ in the children of $q$. This operation adds at most $n$ decision nodes per original child of $q$, so it finished in time $O(n|B''|) = O(n|B|^2)$ when done for all decision nodes and gives a $1$-complete nFBDD $B'''$. Clearly $B'''$ is still $0$-reduced and alternating with respect to the $\lor$-nodes.

Finally we make $B'''$ alternating. For that we just have to consider every $ite(x,q_1,q_0)$ and, if $q_b$ is not a $\lor$-node or a sink, to replace it by a $\lor$-node whose unique child is $q_b$. The operation takes time $O(B''')$. We add one $\lor$-node for the source of the nFBDD if needed.
\end{proof}

The nodes of $1$-complete $0$-reduced alternating nFBDD are organized in layers $L_0,L_1,\dots,$ $L_{2n}$. $L_0$ contains the sinks and, for $1 \leq i \leq 2n$ the layer $L_i$ contains all nodes whose children (except the $0$-sink) are in $L_{i-1}$. We write $L_{\leq i} = L_0 \cup \dots \cup L_i$, and similarly for $L_{\geq i}$, $L_{< i}$ and $L_{> i}$. Note that for all $1 \leq i \leq n$, $L_{2i -1 }$ contains only decision nodes whereas $L_{2i}$ contains only $\lor$-nodes. Importantly, we will assume an arbitrary ordering on the children of the nodes; for every node $q$ we have a sequence (not a set) $\children(q)$ of its children.

\paragraph*{FPRAS.}

For a counting problem that, given an input $x$ of size $|x|$, aims at computing some integer value $N(x)$, a fully polynomial-time randomized approximation scheme (FPRAS) is an algorithm that, given $x$, $\varepsilon > 0$, and $0 < \delta < 1$, runs in time polynomial in $|x|$, $1/\varepsilon$, and $\log(1/\delta)$, and returns $\tilde{N}$ with the guarantee that
$
\Pr\left[ \tilde{N} \in (1 \pm \varepsilon)N(x)\right] \geq 1 - \delta.
$
In this paper we give an \#FPRAS for the problem \#nFBDD.\\

\begin{center}
\noindent\fbox{
\parbox{0.9\textwidth}{{\#nFBDD}

$\quad${\bf Input}: an nFBDD $B$

$\quad${\bf Output}: its number of models $|B^{-1}(1)|$
}}
\end{center}

\subsection{Related Work}\label{sec:relatedwork}

As noted in Section~\ref{section:introduction}, the literature on binary decision diagrams is extensive; therefore, we will focus solely on related results in the context of the model counting problem. The problem of \#nFBDD is \#P-complete: membership in \#P is immediate as every assignment can be evaluated in PTIME, and the \#P-hardness follows from the observation that the problem of \#DNF, 
i.e., counting the number of satisfying assignments of Boolean formulas in Disjunctive Normal Form, is \#P-hard~\cite{V79}. Moreover, every DNF can be represented as an nFBDD such that the size of the resulting nFBDD is polynomial in the size of the DNF. Furthermore, it is also known that the problem of \#nOBDD is SpanL-complete~\cite{ACJR19}.

Given the \#P-hardness, a natural direction of research is to understand the complexity of approximation. The discovery of polynomial-time randomized approximation schemes for \#P-hard problems has been of long-standing interest and has led to several theoretically deep and elegant results. One such result was that of Karp and Luby~\cite{KL83} in the context of \#DNF, relying on Monte Carlo sampling. Building on Monte Carlo sampling, Kannan, Sampath, and Mahaney~\cite{KSM95} proposed a quasi-polynomial running time approximation scheme for \#nOBDD.\footnote{The result of ~\cite{KSM95} was stated for regular languages, but the underlying techniques can handle \#nOBDD.} In a follow-up work~\cite{GJKSM97}, this result was extended to handle context-free grammars by reducing the problem of counting words of a context-free grammar to estimating the support size of multilinear $(+,\times)$-programs. It is straightforward to see that the same reduction can be applied in the context of \#DNNF, implying a quasi-polynomial runtime approximation for \#nFBDD. Since then, the question of the existence of a fully polynomial-time randomized approximation scheme for \#nFBDD and its generalizations has remained open.

In a major breakthrough, Arenas et al.~\cite{ACJR19} provided an FPRAS for \#nOBDD. Their technique relied on union of sets estimation \`a la Karp-Luby and the generation of independent samples via the self-reducibility union property.\footnote{The term {\em self-reducibility union property} was coined by Meel, Chakraborty, and Mathur~\cite{MCM24} to explain the high-level idea of~\cite{ACJR19}.} The self-reducibility union property can be informally stated as follows: The set of models conditioned on a given partial assignment (according to the variable ordering of the given nOBDD) can be expressed as the union of models of the states of the given nOBDD.  In a follow-up work~\cite{ACJR21}, Arenas et al. observed that the problem of model counting over structured DNNF (st-DNNF) circuits also admits FPRAS. 
In this context, it is worth highlighting that the self-reducibility union property does not hold for nFBDD and there exists exponential separation between nFBDD and st-DNNF, i.e., there is a family of functions for which the smallest FBDD is exponentially smaller than the smallest st-DNNF, and therefore, the problem of whether there exists an FPRAS for \#nFBDD remains open.

\section{Technical Overview}\label{section:technical_overview}

Our algorithm proceeds in a bottom-up manner and for every node $q$ of given nFBDD $B$, we  keep: (1) a number $p(q) \in (0,1]$ which seeks to approximate $\frac{1}{|\mods(q)|}$, and therefore, $\frac{1}{p(q)}$ can be used to estimate $|\mods(q)|$, and (2) sets of samples $\{S^r(q)\}_{r=1}^{n_s n_t}$ of $\mods(q)$ where $n_s$ and $n_t$ are polynomial in $n = |var(B)|$, $\varepsilon^{-1}$ and $\log|B|$ respectively. Few comments are in order: we keep many  ($n_s \cdot n_t$) independent sets of samples so as to rely on the median of means estimator.  As mentioned earlier,  our algorithm works in bottom-up manner, in particular, for a node $q$, we will compute $(p(q), \{S^r(q)\}_r)$ using the values of $p(\cdot)$ and $\{S^r(\cdot)\}_r$ of its children.

Ideally, we want every model of $q$ to be in $S^r(q)$ identically and independently with probability $p(q)$. However, obtaining iid samples is computationally expensive, which resulted in quasi-polynomial runtimes in earlier studies~\cite{GJKSM97}. Recent works on FPRAS for nOBDD achieved independence by leveraging self-reducibility union property~\cite{ACJR19,ACJR21}, but, as remarked in Section~\ref{sec:relatedwork}, the self-reducibility union property does not hold for nFBDD and therefore, it is not known how to accomplish independence among samples without giving up on the desiderata of polynomial time.

The key insight in our approach is to give up the desire for independence altogether, in particular, we do not even ensure pairwise independence, i.e., even for $\alpha, \alpha' \in \mods(q)$, it may not be the case that $\Pr[\alpha \in S^r(q) | \alpha' \in S^{r}(q)] = \Pr[\alpha \in S^r(q)]$. Of course, we do need to quantify the dependence. In order to discuss the dependence, we first discuss how we update $p(q)$ and $S^r(q)$ for decision nodes and $\lor$-nodes. 
 
\begin{itemize}	
	\item Let $q = ite(x,q_1,q_0)$. Then we compute $p(q)$ and $S^r(q)$ as $p(q) = \left(\frac{1}{p(q_0)}+ \frac{1}{p(q_1)}\right)^{-1}$ and $S^{r}(q) = \left(\reduce\left(S^r(q_0), \frac{p(q)}{p(q_0)}\right) \otimes \{x \mapsto 0\}\right)  \cup \left(\reduce\left(S^r(q_1) , \frac{p(q)}{p(q_1)}\right) \otimes \{x \mapsto 1\}\right)$, where $\reduce(S,p)$ is the operation that keeps each element of a set $S$ with probability $p$.
	
 	\item Let $q$ be a $\lor$-node such that $q = q_1 \lor q_2$ (assuming two children for simplicity). Furthermore, for simplicity of exposition and for the sake of high-level intuition, assume $p(q_1) = p (q_2)$. The technical difficulty for $\lor$-nodes arises from the fact that it is possible that for a given $\alpha \in \mods(q)$, we have  $\alpha \in \mods(q_1) \cap \mods(q_2)$. Therefore, in order to ensure no $\alpha$ has undue advantage even if it lies in the models set of multiple children, we update $p(q)$ and $S^r(q)$ as follows: we first compute $
		\rho = \min(p(q_1),p(q_2))$ and $\hat{S}^r(q) = S^r(q_1) \cup (S^r(q_2) \setminus \mods(q_1))$
		and then $p(q) = \underset{0 \leq j < n_t}{\median}\left(\frac{1}{\rho \cdot n_s}\sum_{r = j\cdot n_s+1}^{(j+1)n_s}|\hat{S}^r(q)|\right)^{-1}$ followed by $S^r(q) = \reduce\left(\hat{S}^r(q), \frac{p(q)}{\rho}\right)$.
\end{itemize}

 Observe that  the usage of  $S^r(q_2) \setminus \mods(q_1)$ ensures that for every $\alpha \in \mods(q)$, there is exactly one child of $q' \in \children(q)$ such that if $\alpha \in \hat{S}^{r}(q)$ then $\alpha \in S^r(q')$, and therefore, no $\alpha$ has {\em undue advantage}. 

It is worth re-emphasizing the crucial departure in our work from  earlier efforts is embrace of dependence. For instance, consider $q = q_1 \lor q_2$ and $\hat{q} = q_1 \lor q_3$, then $S^{r}(q)$ and $S^r(\hat{q})$ will of course be reusing samples $S^r(q_1)$, and therefore, do not even have pairwise independence. Now, of course, we need to bound dependence so as to retain any hope of computing $p(q)$ from $S^{r}(q)$. To this end, we focus on the following quantity of interest: $\Pr[\alpha \in S^{r}(q) \mid \alpha' \in S^{r}(q)]$ for $\alpha, \alpha' \in \mods(q)$, which determines the variance for the estimator. In this regard, for every $(\alpha, q)$, we can define a derivation path, denoted as $\paths(\alpha, q)$, where for every $\lor$-node, $\paths(\alpha,q)$ is $\paths(\alpha,q')$ appended with $q$, where $q'$ is the first child of q such that $\alpha \in \mods(q')$.  The key observation is that our computations ensure $\Pr[\alpha \in S^r(q) \mid \alpha' \in S^r(q)]$ depends on the first node (starting from the $1$-sink) $q^*$ where the derivation paths $\paths(\alpha,q)$ and $\paths(\alpha,q')$ diverge. In particular, it turns out: 
\[
	\Pr[\alpha \in S^r(q) | \alpha' \in S^r(q)] \leq \frac{p(q)}{p(q^*)}.
\]

One might wonder whether the above expression suffices: it turns out it does, because the number of pairs $(\alpha,\alpha')$ whose derivation paths diverge for the first time at $q^*$ can be shown to be bounded by $\frac{|\mods(q)|^2}{|\mods(q^*)|}$, which suffices to show that the variance of the estimator can be bounded by constant factor of square of its mean, thus allowing us to use median of means estimator. 

We close off by remarking that  the situation is more nuanced than previously described, as $p(q)$ is itself a random variable. Although the high-level intuition remains consistent, the technical analysis requires coupling, based on a carefully defined random process, detailed in 
	Section~\ref{section:random_process}. Simplifying the rigorous technical argument would be an interesting direction of future research.

\section{Algorithm}\label{section:algorithm}

The core of our FPRAS, $\approxMCnFBDDcore$, takes in a $1$-complete, $0$-reduced and alternating nFBDD $B$. $B$'s nodes are processed in bottom-up order, so from the sinks to the source. For each node $q$, the algorithm computes $p(q)$ which seeks to estimate $|\mods(q)|^{-1}$, and polynomially-many subsets of $\mods(q)$ called \emph{sample sets}: $S^1(q),\dots,S^{N}(q)$. 
The algorithm stops after processing the source node $q_\text{source}$ and returns $p(q_\text{source})^{-1}$. The procedure that computes the content of $S^r(q)$ and the value for $p(q)$ is $\estimateAndSample(q)$. Since this procedure uses randomness, the $\{S^r(q)\}_r$ and $p(q)$ are random variables. %
Our FPRAS works towards ensuring that ``$\Pr[\alpha \in S^r(q)] = p(q)$'' holds true for every $q \in B$ and $\alpha \in \mods(q)$. Thus, if $p(q)$ is a good estimate of $|\mods(q)|^{-1}$, then $S^r(q)$ should be small in expectation. We put the equality between quotes because it does not make much sense as the left-hand side is a probability, so a fixed number, whereas the right-hand side is a random variable.

\begin{algorithm}
$p(q) = \textsf{round}(q,(\frac{1}{p(q_0)} + \frac{1}{p(q_1)})^{-1})$\label{line:estimateDecision}\\
\label{line:forReduceDecision}
\For{$1 \leq r \leq n_sn_t$}{
	$S^r(q) = \left(\reduce\left(S^r(q_0),\frac{p(q)}{p(q_0)}\right) \otimes \{x \mapsto 0\}\right) \cup \left(\reduce\left(S^r(q_1),\frac{p(q)}{p(q_1)}\right) \otimes \{x \mapsto 1\}\right)$\\\label{line:reduceDecision}
}
\caption{$\estimateAndSample(q)$ with $q = ite(x,q_1,q_0)$}\label{algorithm:estimateAndSampleDecision}
\end{algorithm} 

Decision nodes and $\lor$-nodes are handled differently by $\estimateAndSample$. When $q$ is a decision node $ite(x,q_1,q_0)$, $p(q)$ is computed deterministically from $p(q_0)$ and $p(q_1)$, and $S^r(q)$ is \emph{reduced from} $(S^r(q_0) \otimes \{x \mapsto 0\}) \cup (S^r(q_1) \otimes \{x \mapsto 1\})$ using the $\reduce$ procedure.

\begin{algorithm}
$S' \leftarrow \emptyset$\\
\For{$s \in S$}{
	add $s$ to $S'$ with probability $p$\\
}
\Return{$S'$}
\caption{$\reduce(S,p)$ with $p \in [0,1]$}
\end{algorithm}

$\estimateAndSample(q)$ is more complicated when $q$ is a $\lor$-node. We explain it gradually. For starter, let $q = q_1 \lor \dots \lor q_k$ with its children ordered as follows: $(q_1,\dots,q_k)$, and consider the problem of approximating $|\mods(q)|$ when a sample set $S(q_i) \subseteq \mods(q_i)$ is available for every $i \in [k]$ and $b \in \{0,1\}$ with the guarantee that $\Pr[\alpha \in S(q_i)] = \rho(q)$ holds for every $\alpha \in \mods(q_i)$. Every $S(q_i)$ is a subset of $\mods(q)$. We compute $\hat S(q)  = \union(q,S(q_1),\dots,S(q_k))$ as follows: for every $\alpha \in S(q_i)$, $\alpha$ is added to $\hat S(q)$ if and only if $q_i$ is the first child of $q$ for which $\alpha$ is a model. Simple computations show that $\Pr[\alpha \in \hat S(q)] = \rho(q)$ holds for every $\alpha \in \mods(q)$, and therefore $\rho(q)^{-1}|\hat S(q)|$ is an unbiased estimate of $|\mods(q)|$ (i.e., the expected value of the estimate is $|\mods(q)|$).

\begin{algorithm}
$S' = \emptyset$\\
\For{$1 \leq i \leq k$}{
	\For{$\alpha \in S_i$}{
		\textbf{if} $\alpha \not\in \mods(q_j)$ for every $j < i$ \textbf{then} add $\alpha$ to $S'$\\ 
	}
}
\Return{$S'$}
\caption{\mbox{$\union(q,S_1,\dots,S_k)$ with $\children(q) = (q_1,\dots,q_k)$}}\label{algorithm:union}
\end{algorithm}

Now suppose that for every $i$, we only know that $\Pr[\alpha \in S(q_i)] = p(q_i)$ for every $\alpha \in \mods(q_i)$ for some number $p(q_i)$ independent of $\alpha$. Then we normalize the probabilities before computing the $\union$. This is done using the $\reduce$ procedure. Let $\rho(q) = \min(p(q_1),\dots,p(q_k))$ and $\bar S(q_i) = \reduce(S(q_i), \frac{\rho(q)}{p(q_i)})$. We have that $\Pr[\alpha \in \bar S(q_i)] = \rho(q)$ holds for every $\alpha  \in \mods(q)$. So now $\hat S(q) = \union(q,\bar S(q_1),\dots,\bar S(q_k))$, and $\rho(q)^{-1}|\hat S(q)|$ is an unbiased estimate of $|\mods(q)|$.
\begin{algorithm}
$\rho(q) = \textsf{min}(p(q_1),\dots,p(q_k))$\label{line:rho}\\
\For{$1 \leq r \leq n_s n_t$}{
	$\hat{S}^r(q) = \union\left(q,\reduce\big(S^r(q_1),\frac{\rho(q)}{p(q_1)}\big),\dots,
	\reduce\big(S^r(q_k),\frac{\rho(q)}{p(q_k)}\big)\right)$\label{line:union}\\
}
\label{line:forMean}
\For{$1 \leq j \leq n_t$}{
$M_j(q) = \frac{1}{\rho(q) n_s}\sum_{r = (j-1) n_s+1}^{j\cdot n_s}|\hat{S}^r(q)|$\label{line:mean}\\
}
$\hat\rho(q) = \median(M_1(q),\dots,M_{n_t}(q))^{-1}$ \label{line:median}\\
$p(q) = \textsf{round}(q,\textsf{min}(\rho(q),\hat\rho(q)))$\label{line:round}\\ \label{line:forReduce}
\For{$1 \leq r \leq n_sn_t$}{
 	$S^r(q) = \reduce\big(\hat{S}^r(q),\frac{p(q)}{\rho(q)}\big)$\label{line:reduce}\\
}
\caption{$\estimateAndSample(q)$ with $q = q_1 \lor \dots \lor q_k$ and $\children(q) = (q_1,\dots,q_k)$}\label{algorithm:estimateAndSampleOr}
\end{algorithm}

To find an estimate that is concentrated around $|\mods(q)|$, we use the ``median of means'' technique. Suppose that instead of one sample set $S(q_i)$ we have several sample sets $S^1(q_i),S^2(q_i),\dots,S^N(q_i)$ all verifying $\Pr[\alpha \in S^r(q_i)] = p(q_i)$. Then define $\bar S^r(q_i)$ similarly to $\bar S(q_i)$ and $\hat S^r(q)$ similarly to $\hat S(q)$. Each $\rho(q)^{-1}|\hat S^r(q)|$ is an estimate of $|\mods(q)|$. Say $N = n_sn_t$ and partition $\hat S^1(q),\hat S^2(q),\dots,\hat S^N(q)$ into $n_t$ batches of $n_s$ sets. The median of means technique computes the average normalized size $M_j(q) = \frac{1}{\rho(q)n_s}\sum_{r = (j - 1)n_s + 1}^{j\cdot n_s} |\hat S^r(q)|$ over each batch and uses $\median(M_1(q),\dots,M_{n_t}(q))$ to estimate $|\mods(q)|$. The mean computation aims at reducing the variance of the estimate. The parameter $n_s$ can be chosen so that $\Pr[M_j(q) \in (1 \pm \varepsilon)|\mods(q)|] > \frac{1}{2}$ holds. With the appropriate value for $n_t$, $\median(M_1(q),\dots,M_{n_t}(q))$ lies in $(1 \pm \varepsilon)|\mods(q)|$ with high probability (though the estimate is not unbiased anymore).

So this is how $\estimateAndSample$ works for $\lor$-nodes. The median of means serves to compute $\hat \rho(q)$, the inverse of the estimate of $|\mods(q)|$ which in turn is used to compute $p(q)$. When $q$ is not the source node $q_\text{source}$, we compute sample sets $S^r(q)$ in preparation for processing of $q$'s ancestors. For this we reuse $\hat{S}^r(q)$ and compute $S^r(q) = \reduce(\hat{S}^r(q),\frac{p(q)}{\rho(q)})$. 

\begin{algorithm}
$p(1\text{-sink}) = 1$, $p(0\text{-sink}) = \infty$\\
\For{$1 \leq r \leq n_sn_t$}{
	$S^r(1\text{-sink}) = \{\alpha_\emptyset\}$, $S^r(0\text{-sink}) = \emptyset$\\
}
\For{$1 \leq i \leq 2n$}{
	\For{$q \in L_i$}{
		$\estimateAndSample(q)$\\ 
		\textbf{if} $|S^r(q)| \geq \theta$ \textbf{then return} $0$\label{line:interrupt} \\
	}
}
\Return{$1/p(q_{\text{source}})$}
\caption{$\approxMCnFBDDcore(B,n,n_s,n_t,\theta)$}
\end{algorithm}

To ensure a polynomial running time, $\approxMCnFBDDcore$ terminates as soon as the number of samples grows too large (Line~\ref{line:interrupt}) and returns $0$. This output is erroneous but we will show that the probability of terminating this way is negligible. For parameters carefully chosen, $\approxMCnFBDDcore$ returns a good estimate of $|B^{-1}(1)|$ with probability larger than $1/2$. The full FPRAS $\approxMCnFBDD$ amplifies this probability to $1 - \delta$ by returning the median output of independent runs of $\approxMCnFBDD$.

\begin{algorithm}
make $B$ alternating, $1$-complete and $0$-reduced\\
$n = |var(B)|$, $m = \lceil 8\ln(1/\delta)\rceil$,
$\kappa = \varepsilon/(1 + \varepsilon)$, 
$n_s = \lceil 4n/\kappa^2 \rceil$, $n_t = 16n|B|$,  
$\theta = \lceil 16n_sn_t(1+\kappa)|B| \rceil$\\
\For{$1 \leq j \leq m$}{
	$\mathsf{est}_j = \approxMCnFBDDcore(B,n,n_s,n_t,\theta)$\\
}
\Return $\median(\mathsf{est}_1,\dots,\mathsf{est}_m)$ 
\caption{$\approxMCnFBDD(B,\varepsilon,\delta)$}
\end{algorithm}

\section{Derivation paths}\label{section:derivation_path} 

Models can have several accepting paths in an nFBDD. For $q$ a node of $B$ and $\alpha \in \mods(q)$, we map $(\alpha,q)$ to a canonical accepting path, called the \emph{derivation path} of $\alpha$ for $q$, denoted by $\paths(\alpha,q)$. A path $\calP$ is formally represented with a tuple $(V(\calP),E(\calP))$, with $V(\calP)$ a sequence of vertices and $E(\calP)$ a sequence of edges.

\begin{definition}
Let $q \in B$ and $\alpha \in \mods(q)$. The derivation path $\paths(\alpha,q)$ is defined as follows:
\begin{itemize}
\item If $q$ is the $1$-sink then $\alpha = \alpha_\emptyset$ and the only derivation path is $\paths(\alpha_\emptyset,q) = (\{q\},\emptyset)$.
\item If $q = ite(x,q_1,q_0)$, let $\alpha'$ be the restriction of $\alpha$ to $var(\alpha) \setminus \{x\}$, then $V(\paths(\alpha,q)) = V(\paths(\alpha',q_{\alpha(x)})) \cdot q$ and $E(\paths(\alpha,q))$ is $E(\paths(\alpha',q_{\alpha(x)}))$ plus the $\alpha(x)$-edge of $q$.
\item If $q = q_1 \lor \dots \lor q_k$ with the children ordering $\children(q) = (q_1,\dots,q_k)$, let $i$ be the smallest integer between $1$ and $k$ such that $\alpha \in \mods(q_i)$ then $V(\paths(\alpha,q)) = V(\paths(\alpha,q_i)) \cdot q$ and $E(\paths(\alpha,q))$ is $E(\paths(\alpha,q_i))$ plus the edge between $q_i$ and $q$.
\end{itemize} 
\end{definition}

Our algorithm constructs sample sets in a way that respect derivation paths. That is, an assignment $\alpha \in \mods(q)$ may end up in $S^r(q)$ only if it is derived through $\paths(\alpha,q)$.

\begin{restatable}{lemma}{algoConsistentWithDerivationPath}
Let $q \in B$ and $\alpha \in \mods(q)$, let $V(\paths(\alpha,q)) = (q_0,q_1,\dots,q_{i-1},q_i)$ with $q_0 $ the $1$-sink and $q_i = q$. For every $j \in [0,i-1]$, let $\alpha_j$ be the restriction of $\alpha$ to $var(q_j)$. In a run of $\approxMCnFBDDcore$, $\alpha \in S^r(q)$ holds only if $\alpha_j \in S^r(q_j)$ holds for every $j \in [0,i-1]$.
\end{restatable}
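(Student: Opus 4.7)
The plan is to prove the statement by induction on the length $i$ of the derivation path $\paths(\alpha,q)$, which coincides with induction on the layer of $q$ in the $1$-complete, $0$-reduced, alternating nFBDD $B$. The base case $i=0$ is immediate: then $q$ is the $1$-sink, $\alpha=\alpha_\emptyset$, and there is nothing to check. For the inductive step I want to show that if $\alpha \in S^r(q)$ then $\alpha_{i-1} \in S^r(q_{i-1})$, where $q_{i-1}$ is the penultimate node of $\paths(\alpha,q)$; the full conclusion then follows from the inductive hypothesis applied to $(\alpha_{i-1},q_{i-1})$, since the restrictions $\alpha_j$ for $j < i-1$ agree whether computed from $\alpha$ or from $\alpha_{i-1}$ (each $\alpha_j$ is the restriction of $\alpha$ to $var(q_j)$, and $var(q_j) \subseteq var(q_{i-1})$).

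The inductive step splits along the two types of internal nodes. If $q = ite(x,q_1,q_0)$, then by the definition of the derivation path, $q_{i-1} = q_{\alpha(x)}$ and $\alpha_{i-1}$ is the restriction of $\alpha$ to $var(q_{\alpha(x)}) = var(\alpha)\setminus\{x\}$. Inspection of $\estimateAndSample$ for decision nodes shows that $S^r(q)$ is obtained by reducing $S^r(q_0) \otimes \{x\mapsto 0\}$ and $S^r(q_1) \otimes \{x\mapsto 1\}$ and taking the union; any $\alpha \in S^r(q)$ therefore has $\alpha(x)$ fixed consistently and its restriction $\alpha_{i-1}$ must already lie in $S^r(q_{\alpha(x)})$.

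If $q$ is a $\lor$-node with $\children(q) = (q_1,\dots,q_k)$, then by definition $q_{i-1} = q_j$ where $j$ is the smallest index with $\alpha \in \mods(q_j)$, and $\alpha_{i-1} = \alpha$ (since $1$-completeness gives $var(q) = var(q_j)$). The algorithm sets $S^r(q) = \reduce(\hat S^r(q), p(q)/\rho)$, so any element of $S^r(q)$ already lies in $\hat S^r(q) = \union(q,\reduce(S^r(q_1),\rho/p(q_1)),\dots,\reduce(S^r(q_k),\rho/p(q_k)))$. By the definition of $\union$ (Algorithm~\ref{algorithm:union}), an assignment $\alpha$ is kept in $\hat S^r(q)$ from the $i'$-th reduced sample set only when $i'$ is the least index with $\alpha \in \mods(q_{i'})$, i.e.\ precisely when $i' = j$; moreover $\reduce$ is a sub-sampling operation, so membership in a reduced set implies membership in the original. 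Hence $\alpha \in S^r(q)$ forces $\alpha \in S^r(q_j) = S^r(q_{i-1})$, as required.

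The main obstacle, and the only place that demands attention, is the $\lor$-node case: I must use the ordering of $\children(q)$ together with the ``first-child'' selection rule built into $\union$ to argue that the witnessing child in the sample construction is exactly the $q_{i-1}$ singled out by the definition of $\paths(\alpha,q)$. Once that alignment is established, the reductions are harmless (they can only shrink sets) and the induction closes cleanly. The rest of the argument is bookkeeping on restrictions and the invariants guaranteed by the $1$-complete, alternating structure of $B$.
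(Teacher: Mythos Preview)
Your proposal is correct and follows essentially the same approach as the paper's proof. The paper phrases the argument as showing, for each $j$, that $\alpha_j \in S^r(q_j)$ implies $\alpha_{j-1} \in S^r(q_{j-1})$, while you frame it as an induction on the path length with a single-step reduction plus the inductive hypothesis; the case analysis for decision nodes and $\lor$-nodes (in particular, the use of the first-child selection rule in $\union$ to identify the witnessing child with $q_{i-1}$) is identical in substance.
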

\begin{proof}
$\alpha_0 = \alpha_\emptyset \in S^r(1\text{-sink}) = S^r(q_0)$ holds by construction. Now consider $j > 0$, it is sufficient to show that $\alpha_j \in S^r(q_j)$ only if $\alpha_{j-1} \in S^r(q_{j-1})$.

If $q_j = ite(x,q_{j,1},q_{j,0})$ then $\alpha_j = \alpha_{j - 1} \cup \{x \mapsto \alpha(x)\}$ and $q_{j-1} = q_{j,\alpha(x)}$. Looking at $\estimateAndSample$ for decision nodes, one sees that $\alpha_j \in S^r(q_j)$ only if $\alpha_{j - 1} \in \reduce(S^r(q_{j,\alpha(x)}), \frac{p(q_j)}{p(q_{j-1})})$, so only if $\alpha_{j-1} \in S^r(q_{j,\alpha(x)}) = S^r(q_{j-1})$. 

If $q_j$ is a $\lor$-node with $\children(q_j) = (q^0_j,\dots,q^k_j)$ then $\alpha_j = \alpha_{j-1}$ and there is an $i$ such that $q_{j-1} = q^i_j$. Looking at $\estimateAndSample$ for decision node, one sees that $\alpha_j \in S^r(q_j)$ only if $\alpha_j \in \hat{S}^r(q_j)$, only if $\alpha_j$ is $\reduce(S^r(q^\ell_j),\rho/p(q^\ell_j))$ for the smallest $\ell$ such that $\alpha_j \in \mods(q^\ell_j)$., so only if $\alpha_j = \alpha_{j-1} \in S^r(q^\ell_j)$. The definition of $\paths(\alpha,q)$ implies that $\ell = i$, so we are done.
\end{proof}

Given two derivation paths $\calP$ and $\calP'$. We call their \emph{last common prefix nodes} denoted by $\lcpn(\calP,\calP')$, the deepest node where the two paths diverge, that is, the first node contained in both paths from which the they follow different edges. Note that if $\calP$ and $\calP'$ are consistent up to node $q'$, and $q = ite(x,q',q')$, and $\calP$ follows the $0$-edge while $\calP'$ follows the $1$-edge, then the two paths diverge at $q'$ even though they both contain $q$. 

\begin{definition} Let $\calP = ((q_0,\dots,q_k),(e_1,\dots,e_k))$ and $\calP' = ((q'_0,\dots,q'_\ell),(e'_1,\dots,e'_\ell))$ be two derivation paths. The last common prefix node, denoted by $\lcpn(\calP,\calP')$, is the node $q_i$ for the biggest $i$ such that $(q_0,\dots,q_i) = (q'_0,\dots,q'_i)$ and $(e_1,\dots,e_i) = (e'_1,\dots,e'_i)$. 
\end{definition}

Note that every derivation path contains the $1$-sink for first node, so the last common prefix node is well-defined. Let $V(\paths(\alpha,q)) = (q_0,\dots,q_i)$ with $q_0$ the $1$-sink and $q_i = q$. For every $0 \leq \ell \leq i$, we define
\[
I(\alpha,q,\ell) := \{\alpha' \in \mods(q) \mid \lcpn(\paths(\alpha,q),\paths(\alpha',q)) = q_\ell\}. 
\]
The following result will play a key role in bounding the variance of estimators in the analysis.

\begin{restatable}{lemma}{derivationPath}\label{lemma:derivation_path}
Let $\alpha \in \mods(q)$ and $V(\paths(\alpha,q)) = (q_0,\dots,q_i)$ with $q_0$ the $1$-sink and $q_i = q$.  For every $0 \leq \ell \leq i$, $|I(\alpha,q,\ell)| \leq \frac{|\mods(q)|}{|\mods(q_\ell)|}$.
\end{restatable}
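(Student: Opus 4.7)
The plan is to construct an injective map $\Gamma \times \mods(q_\ell) \hookrightarrow \mods(q)$, where $\Gamma$ is the image of $I(\alpha,q,\ell)$ under the restriction $\alpha' \mapsto \alpha' \restriction (\var(q) \setminus \var(q_\ell))$ and satisfies $|\Gamma| = |I(\alpha,q,\ell)|$. The bound then drops out of $|\Gamma| \cdot |\mods(q_\ell)| \leq |\mods(q)|$.

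First, I would use the definition of $\lcpn$ to observe that every $\alpha' \in I(\alpha,q,\ell)$ has its derivation path agreeing with $\paths(\alpha,q)$ on every node and edge from the $1$-sink up to $q_\ell$. By the recursive definition of derivation paths, this shared prefix is precisely $\paths(\beta, q_\ell)$ with $\beta := \alpha \restriction \var(q_\ell) \in \mods(q_\ell)$, and equally $\paths(\alpha' \restriction \var(q_\ell), q_\ell)$. Since the derivation path of a model in the $1$-complete sub-nFBDD rooted at $q_\ell$ uniquely determines that model (each variable of $\var(q_\ell)$ appears exactly once along the path as a decision node), this forces $\alpha' \restriction \var(q_\ell) = \beta$. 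Applying $1$-completeness once more at $q$ shows that any descending $q$-to-$q_\ell$ path in $B$ reads exactly the variables in $\var(q) \setminus \var(q_\ell)$, so $\alpha' = \beta \cup \gamma_{\alpha'}$ with $\gamma_{\alpha'} := \alpha' \restriction (\var(q) \setminus \var(q_\ell))$, and the map $\alpha' \mapsto \gamma_{\alpha'}$ is injective on $I(\alpha,q,\ell)$.

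For the second step, I would fix $\gamma \in \Gamma$ with some preimage $\alpha' \in I(\alpha,q,\ell)$ and an arbitrary $\beta^* \in \mods(q_\ell)$, and build an accepting path for $\gamma \cup \beta^*$ in $q$ by splicing: take the descending $q$-to-$q_\ell$ segment of $\paths(\alpha',q)$ (which reads only the values specified by $\gamma$) and concatenate it with any accepting path of $\beta^*$ in the sub-nFBDD rooted at $q_\ell$ (reading only $\beta^*$). The $\lor$-node choices are nondeterministic, so reusing the ones already present in the two sub-paths is fine. This yields $\gamma \cup \beta^* \in \mods(q)$, and the induced map $(\gamma,\beta^*) \mapsto \gamma \cup \beta^*$ is injective on $\Gamma \times \mods(q_\ell)$ since $\var(q_\ell)$ and $\var(q) \setminus \var(q_\ell)$ are disjoint with union $\var(q)$.

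The main technical point worth stating explicitly is the variable-partition claim: that every descending path from $q$ to $q_\ell$ in $B$ reads each variable of $\var(q) \setminus \var(q_\ell)$ exactly once. This follows from $1$-completeness by concatenation — any such $q$-to-$q_\ell$ descent followed by any $q_\ell$-to-$1$-sink descent is a $q$-to-$1$-sink path inside the $1$-complete sub-nFBDD rooted at $q$, hence must read every variable of $\var(q)$ exactly once — but it is what licenses the clean splicing argument and deserves to be recorded as an intermediate lemma. The boundary cases $\ell = 0$ (where $\mods(q_0) = \{\alpha_\emptyset\}$) and $\ell = i$ (where $I(\alpha,q,i) = \{\alpha\}$) check out immediately against the bound.
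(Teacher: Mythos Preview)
Your proposal is correct and follows essentially the same argument as the paper: both show that every $\alpha' \in I(\alpha,q,\ell)$ restricts to $\alpha_\ell = \alpha \restriction \var(q_\ell)$ on $\var(q_\ell)$, then replace $\alpha_\ell$ by an arbitrary $\beta^* \in \mods(q_\ell)$ to obtain an injection $I(\alpha,q,\ell) \times \mods(q_\ell) \hookrightarrow \mods(q)$. Your write-up is more explicit than the paper's---you spell out the path-splicing that justifies $\gamma \cup \beta^* \in \mods(q)$ and the $1$-completeness-based variable-partition claim, both of which the paper asserts without elaboration---but the underlying idea is identical.
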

\begin{proof}
Let $I(\alpha,q,\ell) = \{\alpha^1,\alpha^2,\dots\}$. Let $\alpha_\ell$ be the restriction of $\alpha$ to $var(q_\ell)$. By definition, every $\alpha^i$ is of the form $\alpha_\ell \cup \beta^i$ for some assignment $\beta^i$ to $var(q) \setminus var(q_\ell)$. Consider $\alpha'_\ell \in \mods(q_\ell)$, then every $\alpha'_\ell \cup \beta^i$ is in $\mods(q)$. Since the assignments $\alpha^i$s differ on $var(q) \setminus var(q_\ell)$, the $\beta^i$s are pairwise distinct, and therefore $\{\alpha'_\ell \cup \beta^i\}_i$ is a set of $|I(\alpha,q,\ell)|$ distinct assignments in $\mods(q)$. Considering all $|\mods(q_\ell)|$ possible choices of assignment for $\alpha'_\ell$, we find that $\{\alpha'_\ell \cup \beta^i\}_{\alpha'_\ell,i}$ is a set of $|\mods(q_\ell)|\cdot|I(\alpha,q,\ell)|$ distinct assignments in $\mods(q)$. Hence $|\mods(q_\ell)|\cdot|I(\alpha,q,\ell)| \leq |\mods(q)|$.
\end{proof}
\section{The Framework for the Analysis}\label{section:random_process}

We introduce a random process that simulates $\approxMCnFBDDcore$. Our intuition is that, for every $\alpha \in \mods(q)$, a statement in the veins of ``$\Pr[\alpha \in S^r(q)] = p(q)$'' should hold. The problem is that this equality makes no sense because $\Pr[\alpha \in S^r(q)]$ is a fixed real value whereas $p(q)$ is a random variable. The variables $S^r(q)$ and $p(q)$ for different $q$ are too dependent of each other so we use a random process to work with new variables that behave more nicely. The global picture is that the random process simulates all possible runs of the algorithm for all possible values of $(p(q))_q$ at once. In the random process, $S^r(q)$ is simulated by a different variable for each possible run. We can then ``replace'' $S^r(q)$ by one of these variables assuming enough knowledge on the algorithm run up to $q$. This knowledge is recorded in what we call a \emph{history} for $q$.

\subsection{History}

A \emph{history} $h$ for a set of nodes $Q$ is a mapping
$
h : Q \rightarrow \mathbbm{Q} \cup \{\infty\}.
$
$h$ is \emph{realizable} when there exists a run of $\approxMCnFBDDcore^*$ that gives the value $h(q)$ to $p(q)$ for every $q \in Q$. Such a run is said \emph{compatible} with $h$. Two histories $h$ for $Q$ and $h'$ for $Q'$ are \emph{compatible} when $h(q) = h'(q)$ for all $q \in Q \cap Q'$. Compatible histories can be merged into an history $h \cup h'$ for $Q \cup Q'$. For $q \in Q$ and $t \in \mathbbm{Q}$, we write $h \cup (q \mapsto t)$ to refer to the history $h$ augmented with $h(q) = t$. For $q \in B$, we define the set $\descendants(q)$ of its descendants by $\descendants(1\text{-sink}) = \emptyset$ and $\descendants(q) = \children(q) \cup \bigcup_{q' \in \children(q)} \descendants(q')$. Note that $q \not\in \descendants(q)$. We only study histories realizable for sets $Q$ that are closed for $\descendants$, that is, if $q \in Q$ and $q'$ is a descendant of $q$, then $q' \in Q$. Thus we abuse terminology and refer to a history for $\descendants(q)$ as a history for $q$. The only history for the sinks is the vacuous history $h_\emptyset$ for $Q = \emptyset$ (because no descendants).

\subsection{Random Process} 

The random process comprises $n_sn_t$ independent copies identified by the superscript $r$. For $q$ a node of the nFBDD, $t \in \mathbbm{Q} \cup \{\infty\}$ and $h$ a realizable history for $q$, we have a random variable 
$\vY{r}{h}{t}{q}$ whose domain is all possible subsets of $\mods(q)$. $\vY{r}{h}{t}{q}$ is meant to simulate $S^r(q)$ when the history for $q$ is $h$ and the value $t$ is assigned to $p(q)$. The variables $\vY{r}{h}{t}{q}$ are defined for realistic values of $t$ (i.e., values that can be given to $p(q)$ by the algorithm) in an inductive way.
\begin{itemize}
\item If $q$ is the $0$-sink, then $\mods(q) = \emptyset$ and only $\vY{r}{h_\emptyset}{\infty}{q} = \emptyset$ is defined.
\item If $q$ is the $1$-sink, then $\mods(q) = \{\alpha_\emptyset\}$ and only $\vY{r}{h_\emptyset}{1}{q} = \{\alpha_\emptyset\}$ is defined.
\item If $q = ite(x,q_1,q_0)$ then for every $\vY{r}{h_0}{t_0}{q_0}$ and $\vY{r}{h_1}{t_1}{q_1}$ with $h_1$ and $h_0$ realizable and compatible histories for $q_1$ and $q_0$, respectively, we define $h = h_1 \cup h_0 \cup (q_0 \mapsto t_0, q_1 \mapsto t_1)$ and $t = \left(\frac{1}{t_0} + \frac{1}{t_1}\right)^{-1}$ and
\[
\vY{r}{h}{t}{q} = \reduce\left(\vY{r}{h_0}{t_0}{q_0} \otimes \{x \mapsto 0\}, \frac{t}{t_0}\right) \cup \reduce\left(\vY{r}{h_1}{t_1}{q_1} \otimes \{x \mapsto 1\}, \frac{t}{t_1}\right).
\] 
\item If $q = q_1 \lor \dots \lor q_k$ then for every $\vY{r}{h_1}{t_1}{q_1}$, $\dots$, $\vY{r}{h_k}{t_k}{h_k}$ with realizable and pairwise compatible histories, we define $h = h_1 \cup \dots \cup h_k \cup (q_1 \mapsto t_1,\dots,q_k \mapsto t_k)$, $t_{\min} = \min(t_1,\dots,t_k)$ and the variable $\vZ{r}{h}{q}$ that simulates $\hat{S}^r(q)$ when the history for $q$ is $h$
\[
\vZ{r}{h}{q} = \union\left(q,\reduce\left(\vY{r}{h_1}{t_1}{q_1},\frac{t_{\min}}{t_1}\right), \dots ,\reduce\left(\vY{r}{h_k}{t_k}{h_k},\frac{t_{\min}}{t_k}\right)\right).
\]
Now for all $t \leq t_{\min}$ we define
$
\vY{r}{h}{t}{q} = \reduce\left(\vZ{r}{h}{q},\frac{t}{t_{\min}}\right).
$
\end{itemize}

Let $H(q)$ be the random variable on the history for $q$ obtained when running $\approxMCnFBDDcore^*$. The following lemma is essentially saying that a probability $\Pr[H(q) = h \text{ and }event]$ where $event$ is an event over random variables defined by $\approxMCnFBDDcore^*$ for $q$ and descendants of $q$ and can be replaced by $\Pr[event_h]$ where $event_h$ is the same event as $event$ but where the variables are replaced by their counterpart in the random process for the history $h$.

\begin{restatable}{lemma}{notBlackMagic}\label{lemma:not_so_black_magic}
Let $\vec S(q) = (S^r(q) \mid r \in [n_sn_t])$, $\vec Z(q) = (\hat S^r(q) \mid r \in [n_sn_t])$, $\vec \calS(q) = Z(q) \cup \bigcup_{q' \in \descendants(q)} \vec S(q') \cup \vec Z(q')$, $\vec{p}(q) = (p(q') \mid q' \in \descendants(q))$ and $e(\vec \calS(q), \vec p(q))$ be an event function of $\vec \calS(q)$ and $\vec p(q)$. For $h$ an history for $q$, let $\vec{h}(q) = (h(q') \mid q' \in \descendants(q))$ and let $\vec{\mathfrak{S}}_h(q)$ be the same sequence as $\vec \calS(q)$ where each $S^r(q')$ is replaced by $\mathfrak{S}^r_{h_{q'},h(q')}(q')$ with $h_{q'}$ the restriction of $h$ to $\descendants(q')$ and where each $\hat S^r(q')$ is replaced by $\hat {\mathfrak{S}}^r_{h_{q'}}(q')$. Then
$$
\Pr[H(q) = h \text{ and } e(\vec \calS(q), \vec p(q))] \leq \Pr[e(\vec{\mathfrak{S}}_h(q), \vec{h}(q)]
$$
\end{restatable}

\noindent The lemma above and the two next lemmas are proved in appendix. These lemmas give the sound variants of ``$\Pr[\alpha \in S^r(q)] = p(q)$'' in the random process.

\begin{restatable}{lemma}{probaFirstOrder}\label{lemma:proba_first_order}
For every $\vY{r}{h}{t}{q}$ and $\alpha \in \mods(q)$, it holds that
$\Pr[\alpha \in \vY{r}{h}{t}{q}] = t$.
In addition, if $q$ is a $\lor$-node with $\children(q) = (q_1,\dots,q_k)$ then $\Pr[\alpha \in \vZ{r}{h}{q}] = \min(h(q_1),\dots,h(q_k))$. 
\end{restatable}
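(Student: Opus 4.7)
My plan is to prove both equalities simultaneously by structural (bottom-up) induction on $q$, mirroring the inductive definition of $\vY{r}{h}{t}{q}$ and $\vZ{r}{h}{q}$ given in Section~\ref{section:random_process}. The base cases are immediate: for the $1$-sink, $\vY{r}{h_\emptyset}{1}{q} = \{\alpha_\emptyset\}$ deterministically and the only model is $\alpha_\emptyset$, so the probability is $1 = t$; the $0$-sink case is vacuous since $\mods(0\text{-sink}) = \emptyset$.

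For the decision-node step, let $q = ite(x,q_1,q_0)$, let $\alpha \in \mods(q)$, and write $b = \alpha(x)$ and $\alpha'$ for the restriction of $\alpha$ to $\var(\alpha)\setminus\{x\}$; then $\alpha' \in \mods(q_b)$ and $\alpha = \alpha' \cup \{x \mapsto b\}$ in a unique way. By construction $\alpha \in \vY{r}{h}{t}{q}$ iff $\alpha' \in \vY{r}{h_b}{t_b}{q_b}$ \emph{and} the independent Bernoulli($t/t_b$) coin flip used by $\reduce(\cdot, t/t_b)$ succeeds. Since the coin flip uses fresh randomness independent of $\vY{r}{h_b}{t_b}{q_b}$, applying the inductive hypothesis gives $\Pr[\alpha \in \vY{r}{h}{t}{q}] = t_b \cdot (t/t_b) = t$.

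For the $\lor$-node step, I would first handle $\vZ{r}{h}{q}$. Fix $\alpha \in \mods(q)$ and let $i$ be the smallest index in $[k]$ with $\alpha \in \mods(q_i)$. By the specification of $\union$ (Algorithm~\ref{algorithm:union}), $\alpha$ is placed in $\vZ{r}{h}{q}$ iff it survives in the $i$th reduced set $\reduce(\vY{r}{h_i}{t_i}{q_i}, t_{\min}/t_i)$; the sets from $j > i$ are irrelevant because $\union$ only inserts $\alpha$ from the first child whose models contain it, and the sets from $j < i$ are irrelevant because $\alpha \notin \mods(q_j)$ means $\alpha$ cannot appear in $\vY{r}{h_j}{t_j}{q_j}\subseteq \mods(q_j)$ at all. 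The inductive hypothesis gives $\Pr[\alpha \in \vY{r}{h_i}{t_i}{q_i}] = t_i$, and the independent Bernoulli($t_{\min}/t_i$) coin flip in $\reduce$ yields $\Pr[\alpha \in \vZ{r}{h}{q}] = t_i \cdot (t_{\min}/t_i) = t_{\min}$. Finally, for $t \leq t_{\min}$, $\vY{r}{h}{t}{q} = \reduce(\vZ{r}{h}{q}, t/t_{\min})$, and the same reasoning gives $\Pr[\alpha \in \vY{r}{h}{t}{q}] = t_{\min} \cdot (t/t_{\min}) = t$.

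The only technical point that needs care is the independence between the inductively constructed random set and the fresh randomness used at each call of $\reduce$; this holds because every invocation of $\reduce$ draws independent coins, and the coins used at node $q$ are disjoint from those used when constructing the variables of the children. Nothing else in the argument requires care — the induction is essentially a direct unwinding of the inductive definition, and $\mods(q_\ell) \subseteq \mods(q)$ together with the ``first child containing $\alpha$'' rule of $\union$ ensures that exactly one reduced child-set is responsible for potentially contributing $\alpha$ to $\vZ{r}{h}{q}$, which is what prevents any overcount.
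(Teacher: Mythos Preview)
Your proposal is correct and follows essentially the same approach as the paper's proof: a bottom-up induction along the structure of $q$ (the paper phrases it as induction on the layer index $i$, which is equivalent here), handling sinks as base cases, decision nodes via the $\reduce$ coin flip, and $\lor$-nodes by first establishing the claim for $\vZ{r}{h}{q}$ using the unique first child $q_i$ with $\alpha\in\mods(q_i)$ and then deriving the claim for $\vY{r}{h}{t}{q}$ via the final $\reduce$. Your explicit justification of why indices $j\neq i$ are irrelevant and your remark on the independence of the fresh $\reduce$ coins are exactly the points the paper uses (the latter implicitly).
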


\begin{restatable}{lemma}{probaSecondOrder}\label{lemma:proba_second_order}
For every $\vY{r}{h}{t}{q}$ with $\children(q) = (q_1,\dots,q_k)$ and $\alpha,\alpha' \in \mods(q)$ with $\alpha \neq \alpha'$, let $t^* = h(\lcpn(\paths(\alpha,q),\paths(\alpha',q))$, then we have that
$
\Pr[\alpha \in \vY{r}{h}{t}{q} \mid \alpha' \in \vY{r}{h}{t}{q}] \leq \frac{t}{t^*}
$ 
and if $q$ is $\lor$-node then
$
\Pr[\alpha \in \vZ{r}{h}{q} \mid \alpha' \in \vZ{r}{h}{q}] \leq \frac{\min(h(q_1),\dots,h(q_k))}{t^*}.
$
\end{restatable}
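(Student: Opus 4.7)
The plan is to prove both inequalities simultaneously by structural induction on $q$. The base case $q = 1\text{-sink}$ is vacuous since $|\mods(q)| = 1$ precludes the existence of distinct $\alpha, \alpha'$.

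For a decision node $q = ite(x, q_1, q_0)$, I would let $b = \alpha(x)$, $b' = \alpha'(x)$, and write $\bar\alpha, \bar\alpha'$ for the restrictions of $\alpha, \alpha'$ to $var(q) \setminus \{x\}$. Unfolding the definition, ``$\alpha \in \vY{r}{h}{t}{q}$'' is the conjunction of ``$\bar\alpha \in \vY{r}{h_b}{t_b}{q_b}$'' with a fresh top-level $\reduce$ Bernoulli of parameter $t/t_b$, and similarly for $\alpha'$. When $b = b'$, the two inner memberships refer to the same variable $\vY{r}{h_b}{t_b}{q_b}$; applying the IH on $q_b$, using that appending a common final edge preserves the lcpn, and multiplying by the $t/t_b$ factor from the independent outer Bernoulli, I obtain the required bound $t/t^*$. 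When $b \ne b'$, the two memberships live in separate branches, but their derivation paths share a common descendant $q^*$ (the lcpn). I would condition on the event that the common restriction $\alpha^* = \alpha|_{var(q^*)} = \alpha'|_{var(q^*)}$ lies in $\vY{r}{h^*}{t^*}{q^*}$, which by Lemma~\ref{lemma:proba_first_order} has probability $t^*$. Conditional on this, a telescoping argument along the derivation paths from $q^*$ upward shows that the events $\bar\alpha \in \vY{r}{h_b}{t_b}{q_b}$ and $\bar\alpha' \in \vY{r}{h_{b'}}{t_{b'}}{q_{b'}}$ occur with conditional probabilities $t_b/t^*$ and $t_{b'}/t^*$ and, crucially, independently, since the residual $\reduce$ Bernoullis for $\alpha$ and $\alpha'$ live above the point where their paths diverge. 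The joint is then bounded by $t_b t_{b'}/t^*$; multiplying by the independent outer $(t/t_b)(t/t_{b'})$ gives $\Pr[\alpha, \alpha' \in \vY{r}{h}{t}{q}] \leq t^2/t^*$, and dividing by the marginal $t$ yields the claimed $t/t^*$.

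For a $\lor$-node $q = q_1 \lor \dots \lor q_k$ I would first prove the $\vZ$ bound. Letting $i(\beta)$ denote the smallest index such that $\beta \in \mods(q_{i(\beta)})$, the event ``$\beta \in \vZ{r}{h}{q}$'' is exactly ``$\beta \in \vY{r}{h_{i(\beta)}}{t_{i(\beta)}}{q_{i(\beta)}}$'' together with a fresh top-level $\reduce$ Bernoulli of parameter $t_{\min}/t_{i(\beta)}$. The case split $i(\alpha) = i(\alpha')$ versus $i(\alpha) \ne i(\alpha')$ mirrors the decision-node analysis: the first case invokes the IH on the common child and picks up the independent $t_{\min}/t_{i(\alpha)}$ factor, the second uses the same factorization-through-$q^*$ argument. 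The $\vY$ bound for $\lor$-nodes then follows from the $\vZ$ bound because $\vY{r}{h}{t}{q} = \reduce(\vZ{r}{h}{q}, t/t_{\min})$: the extra outer $\reduce$ contributes a $t/t_{\min}$ factor to both joint and marginal, turning the $t_{\min}/t^*$ bound into $t/t^*$.

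The main obstacle is formalizing the factorization of the random process along derivation paths that underlies the ``different-branch'' sub-cases. Concretely, this requires a companion lemma, analogous to the algorithm-level derivation-path consistency result already established, stated for the random process variables: $\alpha \in \vY{r}{h}{t}{q}$ implies $\alpha|_{var(q')} \in \vY{r}{h|_{\descendants(q') \cup \{q'\}}}{h(q')}{q'}$ for every $q'$ on $\paths(\alpha, q)$, and past the lcpn the residual $\reduce$ Bernoullis for two distinct assignments are independent. Given such a lemma, Lemma~\ref{lemma:proba_first_order} applied at successive nodes along each derivation path turns the telescoping into the $t/t^*$ bound essentially for free, closing the induction uniformly in both cases.
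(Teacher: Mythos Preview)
Your proposal is correct but takes a genuinely different route from the paper. The paper does \emph{not} keep the single-node inductive hypothesis; instead it strengthens the statement to a two-node version: for any $q,q'$ in the \emph{same layer} $L_i$ (possibly $q=q'$), any compatible histories $h,h'$, and any $\alpha\in\mods(q)$, $\alpha'\in\mods(q')$, one has $\Pr[\alpha\in\vY{r}{h}{t}{q}\text{ and }\alpha'\in\vY{r}{h'}{t'}{q'}]\le tt'/t^*$ (and the analogous $\vZ$ statement). With this loaded hypothesis, the ``different-branch'' sub-cases become routine: the children $q_b,q'_{b'}$ (or $q_{i(\alpha)},q'_{i(\alpha')}$) lie in the same layer $L_{i-1}$, the two top-level $\reduce$ calls are independent because neither node is an ancestor of the other, and the remaining joint probability is bounded directly by the IH at layer $i-1$. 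No factorization through the lcpn and no companion lemma are needed.

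Your approach keeps the single-node hypothesis and instead handles the different-branch case by conditioning on the common event at $q^*$ and arguing that the residual Bernoullis for $\alpha$ and $\alpha'$ above $q^*$ are independent. This is valid, and the companion lemma you describe is provable; the non-obvious point is that even when the two derivation paths re-merge at some node $q'$ strictly above $q^*$, the restrictions $\alpha|_{\var(q')}$ and $\alpha'|_{\var(q')}$ must differ (otherwise their derivation sub-paths from the $1$-sink to $q'$ would coincide, contradicting divergence at $q^*$), so the element-wise $\reduce$ Bernoullis remain distinct. Your route gives a clearer probabilistic picture of where the $t/t^*$ ratio comes from, but the paper's ``load the induction'' trick is cleaner in that it avoids the auxiliary independence analysis altogether and uniformly reduces every case to the layer below.
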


\section{Analysis}\label{section:analysis}
We now conduct the analysis of \approxMCnFBDD. The hardest part to analyze is the core algorithm \approxMCnFBDDcore, for which we will prove the following.

\begin{lemma}\label{lemma:main_result_core}
	Let $B$ be a $1$-complete $0$-reduced and alternating nFBDD over $n$ variables. Let $m = \max_i |L^i|$, $\varepsilon > 0$, and $\kappa = \frac{\varepsilon}{1 + \varepsilon}$. If $n_s = \lceil \frac{4n}{\kappa^2}\rceil $, $n_t = 16 n|B|$ and $\theta = \lceil 16n_sn_t(1+\kappa)|B| \rceil$ then $\approxMCnFBDDcore(B,n,n_s,n_t,\theta)$ runs in time $O(n_sn_t \theta |B|^2)$ and returns $\mathsf{est}$ with the guarantee
$
		\Pr\left[\mathsf{est}  \notin (1 \pm \varepsilon) |B^{-1}(1)| \right] \leq \frac{1}{4}.
$
\end{lemma}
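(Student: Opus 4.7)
The plan is to prove the result by induction on the layer index $i$, maintaining as invariant that after processing $L_{\leq i}$ every processed node $q$ satisfies $p(q)\cdot|\mods(q)| \in (1 \pm \kappa)$ and every sample set verifies $|S^r(q)| \leq \theta$, with overall failure probability at most $1/4$. The crucial feature I will exploit is that the $\lor$-node estimator is unbiased in the strong sense that $\Ex[M_j] = |\mods(q)|$ regardless of the errors in the children's $p(q_i)$, so multiplicative error does \emph{not} accumulate across $\lor$-layers: every $\lor$-node receives a fresh $(1\pm\kappa)$ estimate provided its variance is under control. All probabilistic arguments are carried out inside the random-process framework of Section~\ref{section:random_process}: Lemma~\ref{lemma:not_so_black_magic} couples the algorithm variables $S^r(q), \hat{S}^r(q)$ to the random-process variables $\vY{r}{h}{t}{q}, \vZ{r}{h}{q}$, and Fact~\ref{fact:independence} says that once we condition on a fixed history $h$ these random-process variables are genuinely independent of $h$, so the history-fixed bounds of Lemmas~\ref{lemma:proba_first_order} and~\ref{lemma:proba_second_order} can be invoked inside the inductive calculation.

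For the inductive step at a $\lor$-node $q$ with $\children(q) = (q_1,\dots,q_k)$, I would fix a realizable history $h$ satisfying the invariant and set $\rho = \min_i h(q_i)$. Lemma~\ref{lemma:proba_first_order} gives $\Ex[|\vZ{r}{h}{q}|] = \rho|\mods(q)|$, so $\Ex[M_j] = |\mods(q)|$. For the second moment I expand
\[
\Ex\bigl[|\vZ{r}{h}{q}|^2\bigr] = \sum_{\alpha \in \mods(q)} \Pr[\alpha \in \vZ{r}{h}{q}] + \sum_{\alpha \neq \alpha'} \Pr[\alpha,\alpha' \in \vZ{r}{h}{q}],
\]
bound each off-diagonal probability by $\rho^2/h(q^\star)$ with $q^\star = \lcpn(\paths(\alpha,q),\paths(\alpha',q))$ using Lemmas~\ref{lemma:proba_first_order} and~\ref{lemma:proba_second_order}, and group pairs by the value of $q^\star$ along $V(\paths(\alpha,q)) = (q_0,\dots,q_i)$. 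Lemma~\ref{lemma:derivation_path} caps the number of $\alpha'$ with $q^\star = q_\ell$ by $|\mods(q)|/|\mods(q_\ell)|$, and the invariant gives $h(q_\ell)|\mods(q_\ell)| \in (1\pm\kappa)$, so each of the at most $2n$ possible $\ell$ contributes $O(\rho^2|\mods(q)|^2)$ to the off-diagonal sum. This yields $\Va(M_j)/\Ex(M_j)^2 \leq O(n)/n_s$, and Chebyshev with $n_s \geq 4n/\kappa^2$ gives $\Pr[M_j \notin (1\pm\kappa)|\mods(q)|] \leq 1/2$. A standard median-of-means Chernoff bound with $n_t \geq 8\ln(16|B|)$ then amplifies the per-$\lor$-node failure probability to at most $1/(16|B|)$; decision nodes preserve the invariant deterministically, since $p(q)^{-1} = p(q_0)^{-1} + p(q_1)^{-1}$ exactly mirrors $|\mods(q)| = |\mods(q_0)| + |\mods(q_1)|$ and therefore does not amplify the multiplicative error.

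Sample-size control is a Markov bound: the invariant gives $\Ex[|S^r(q)|] = p(q)|\mods(q)| \leq 1+\kappa$, hence $\Pr[|S^r(q)| \geq \theta] \leq (1+\kappa)/\theta = 1/(16 n_s n_t |B|)$. Union-bounding across the at most $|B|$ nodes and the $n_s n_t$ copies contributes at most $1/16$, and together with the $|B|\cdot 1/(16|B|) = 1/16$ from estimate errors the total failure probability stays well under $1/4$. The final conclusion $\mathsf{est} = p(q_{\mathrm{source}})^{-1} \in (1\pm\varepsilon)|B^{-1}(1)|$ then follows from $p(q_{\mathrm{source}})|B^{-1}(1)| \in (1\pm\kappa)$ via $\kappa = \varepsilon/(1+\varepsilon)$. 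For the runtime, each $\estimateAndSample$ call manipulates $O(n_s n_t)$ sample sets of size $O(\theta)$, and each $\union$ performs $O(|B|)$-time membership tests per sample (reachability in the nFBDD rooted at a child), for $O(n_s n_t \theta |B|)$ per node, plus $O(n_t \log n_t)$ for each median; summed across the $O(|B|)$ nodes this gives the stated $O(n_s n_t \log(n_t)\cdot \theta \cdot |B|^2)$ bound.

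The main obstacle I expect is the careful propagation of the ``good history'' event. Because $p(q)$ is itself random and each $S^r(q)$ depends on all the algorithm's random choices below $q$, the informal intuition ``$\Pr[\alpha \in S^r(q)] = p(q)$'' is literally ill-typed, which is precisely what the random-process framework of Section~\ref{section:random_process} exists to fix. Turning the history-fixed estimates of Lemmas~\ref{lemma:proba_first_order} and~\ref{lemma:proba_second_order} into an unconditional high-probability guarantee for the algorithm requires peeling off the conditioning layer by layer, invoking Lemma~\ref{lemma:not_so_black_magic} to replace algorithm variables by their random-process counterparts and Fact~\ref{fact:independence} to justify treating those counterparts as independent of the history at each step. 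Getting the constants consistent so that the stated values of $n_s$, $n_t$, and $\theta$ really do suffice — in particular showing that the estimate error does not blow up to $(1\pm\kappa)^n$ across the $n$ layers of $\lor$-nodes, which would have forced $\kappa = \Theta(\varepsilon/n)$ and ruined the runtime — is where the delicate bookkeeping lies.
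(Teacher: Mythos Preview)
Your approach is essentially the same as the paper's: you use the random-process coupling (Lemma~\ref{lemma:not_so_black_magic} and Fact~\ref{fact:independence}) to fix a good history, bound the variance of $\vZ{r}{h}{q}$ via the derivation-path count (Lemma~\ref{lemma:derivation_path}) combined with Lemmas~\ref{lemma:proba_first_order} and~\ref{lemma:proba_second_order}, then apply Chebyshev plus median-of-means for $p(q)$ and Markov for the sample-set sizes, union-bounding over all nodes and copies. The only difference is organizational: the paper splits the argument into two separate lemmas (one for $p(q)\in\Delta(q)$, one for $|S^r(q)|<\theta$) and phrases the former via a ``first node where $p(q)\notin\Delta(q)$'' decomposition rather than your explicit layer induction, but these are equivalent formulations of the same bookkeeping.
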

The probability $1/4$ is decreased down to any $\delta > 0$ with the median technique. Thus giving our main result. 
\mainResult*
\begin{proof}
Let $\mathsf{est}_1,\dots,\mathsf{est}_m$ be the estimates from $m = O(\log(\delta^{-1}))$ independent calls to $\approxMCnFBDDcore$. Let $X_i$ be the indicator variable that takes value $1$ if and only if $\mathsf{est}_i \not\in (1\pm \varepsilon)|B^{-1}(1)|$, and define $\bar X = X_1 + \dots + X_m$. By Lemma~\ref{lemma:main_result_core}, $\Ex[\bar X] \leq m/4$. Hoeffding bound gives
$$
\Pr\left[\median(X_1,\dots,X_m) \not\in (1 \pm \varepsilon)|B^{-1}(1)|\right] = \Pr\left[\bar X > \frac{m}{2}\right] \leq \Pr\left[\bar X - \Ex[\bar X] > \frac{m}{4}\right] \leq e^{-m/8} \leq \delta.
$$
The running time is $O(m)$ times that of $\approxMCnFBDDcore(B',n,n_s,n_t)$ where $B'$ is $B$ after it has been made $1$-complete, $0$-reduced, and alternating. By Lemma~\ref{lemma:makeComplete}, $|B'| = O(n|B|^2)$ and $B'$ is constructed in time $O(n|B|^2)$. So each call to $\approxMCnFBDDcore(B',n,$ $n_s,n_t)$ takes time $O(n^{11}|B|^{10}\varepsilon^{-4})$
\end{proof}

Recall that $\approxMCnFBDDcore^*$ is $\approxMCnFBDDcore$ without the terminating condition of Line~\ref{line:interrupt}, that is, the same algorithm but where any $|S^r(q)|$ can grow big. Analyzing $\approxMCnFBDDcore^*$ is enough to prove Lemma~\ref{lemma:main_result_core} without running time requirements. In particular, it is enough to prove Lemmas~\ref{lemma:proba_p(q)} and~\ref{lemma:proba_S(q)}. For these lemmas the settings described in Lemma~\ref{lemma:main_result_core} are assumed. 

\begin{lemma}\label{lemma:proba_p(q)}
The probability that $\approxMCnFBDDcore^*(B,n,n_s,n_t,\theta)$ computes $p(q) \not\in (1 \pm \kappa)|\mods(q)|^{-1}$ for some $q \in B$ is at most $1/16$.
\end{lemma}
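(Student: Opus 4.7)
My plan is to prove Lemma~\ref{lemma:proba_p(q)} by strong induction on the layer of $q$ in $B$, combined with a union bound. For every $q \in B$, let $E_q := \{p(q) \in (1\pm\kappa)|\mods(q)|^{-1}\}$ and $D_q := \bigcap_{q' \in \descendants(q)} E_{q'}$. By processing in bottom-up order and the ``first bad node'' argument, $\Pr[\bigcup_q E_q^c] \leq \sum_q \Pr[E_q^c \cap D_q] \leq \sum_q \Pr[E_q^c \mid D_q]$, so it suffices to prove $\Pr[E_q^c \mid D_q] \leq 1/(8|B|)$ for every $q \in B$.

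The deterministic cases are straightforward. For the sinks, $p$ is set exactly to $|\mods|^{-1}$. For a decision node $q = ite(x, q_1, q_0)$, we have $|\mods(q)| = |\mods(q_0)| + |\mods(q_1)|$ because $B$ is $1$-complete at $x$, hence $p(q)^{-1} = p(q_0)^{-1} + p(q_1)^{-1} \in (1\pm\kappa)|\mods(q)|$ whenever $D_q$ holds. No probability is spent at decision nodes.

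The substantive case is a $\lor$-node $q$ with $\children(q) = (q_1, \dots, q_k)$. I would fix a realizable history $h$ for $q$ satisfying $D_q$, set $\rho := \min_i h(q_i)$, and use Lemma~\ref{lemma:not_so_black_magic} together with Fact~\ref{fact:independence} to replace $\hat S^r(q)$ by $\vZ{r}{h}{q}$, which is independent of everything computed at or below $q$. By Lemma~\ref{lemma:proba_first_order}, $\Ex[|\vZ{r}{h}{q}|] = \rho|\mods(q)|$, so $\Ex[M_j] = |\mods(q)|$. For the variance, I would write $\Va(|\vZ{r}{h}{q}|) \leq \rho|\mods(q)| + \sum_{\alpha \neq \alpha'} \Pr[\alpha, \alpha' \in \vZ{r}{h}{q}]$, apply Lemma~\ref{lemma:proba_second_order} to bound each joint probability by $\rho^2/h(q^*)$ with $q^* := \lcpn(\paths(\alpha,q), \paths(\alpha',q))$, group pairs by the position $q_\ell$ of the $\lcpn$ on $\paths(\alpha,q)$, and combine Lemma~\ref{lemma:derivation_path} with the inductive bound $h(q_\ell) \geq (1-\kappa)/|\mods(q_\ell)|$ to obtain the telescoping
\[
\sum_{\alpha' \neq \alpha} \Pr[\alpha, \alpha' \in \vZ{r}{h}{q}] \leq \sum_\ell \frac{|\mods(q)|}{|\mods(q_\ell)|} \cdot \frac{\rho^2 |\mods(q_\ell)|}{1-\kappa} \leq \frac{(2n+1)\rho^2 |\mods(q)|}{1-\kappa},
\]
since derivation paths have length $O(n)$. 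Summing over $\alpha$ gives $\Va(|\vZ{r}{h}{q}|) = O(n\rho^2|\mods(q)|^2)$, and averaging over the $n_s$ independent copies in a batch yields $\Va(M_j) = O(n|\mods(q)|^2/n_s)$. With $n_s \geq 4n/\kappa^2$, Chebyshev delivers $\Pr[M_j \notin (1\pm\kappa)|\mods(q)|] \leq 1/4$ per batch, and a Hoeffding bound for the median over the $n_t \geq 8\ln(16|B|)$ independent batches gives
\[
\Pr\!\left[\underset{0 \leq j < n_t}{\median}(M_j) \notin (1\pm\kappa)|\mods(q)|\right] \leq 2 e^{-n_t/8} \leq \frac{1}{8|B|}.
\]
Writing $\hat\rho := \median_j(M_j)^{-1}$, a short interval-arithmetic check, using $\rho \geq (1-\kappa)/|\mods(q)|$ (which follows from $\mods(q_i) \subseteq \mods(q)$), shows that the median lying in $(1\pm\kappa)|\mods(q)|$ forces $p(q) = \min(\rho, \hat\rho) \in (1\pm\kappa)/|\mods(q)|$. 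Summing the per-node bounds over the at most $|B|$ nodes yields the claimed $1/8$.

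The main obstacle is the variance bound at $\lor$-nodes. Samples are far from independent: they are reused across sibling $\lor$-nodes, and within a single $\lor$-node the pairs $(\alpha, \alpha')$ that share a long derivation-path prefix are strongly positively correlated. The derivation-path framework of Section~\ref{section:derivation_path} and the coupling to the random process of Section~\ref{section:random_process} are engineered precisely so that Lemma~\ref{lemma:proba_second_order} supplies a clean per-pair covariance controlled by $h(\lcpn)$, while Lemma~\ref{lemma:derivation_path} caps the number of colliding pairs at each level so that the two factors telescope to yield the $O(n)$ dependence that the chosen $n_s$ absorbs. Once the variance is under control, the Chebyshev step, the median amplification, and the union bound over nodes are routine.
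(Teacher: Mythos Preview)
Your proposal follows the paper's approach step for step: the first-bad-node reduction, the trivial treatment of sinks and decision nodes, the coupling to the random process for $\lor$-nodes via Lemma~\ref{lemma:not_so_black_magic} and Fact~\ref{fact:independence}, the variance bound through Lemmas~\ref{lemma:proba_first_order}, \ref{lemma:proba_second_order}, and~\ref{lemma:derivation_path}, and finally Chebyshev plus Hoeffding and a union bound over nodes.

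There is, however, one concrete slip in your interval arithmetic. You aim for $\median_j(M_j)\in(1\pm\kappa)|\mods(q)|$, but inverting this only gives $\hat\rho\le \frac{1}{(1-\kappa)|\mods(q)|}$, which exceeds $\frac{1+\kappa}{|\mods(q)|}$; and $\rho=\min_i h(q_i)$ need not rescue the upper bound, since every child may satisfy $|\mods(q_i)|<|\mods(q)|$ and hence $h(q_i)>\frac{1+\kappa}{|\mods(q)|}$. So the ``short interval-arithmetic check'' fails as stated. The paper avoids this by targeting the interval $\nabla(q)=\frac{|\mods(q)|}{1\pm\kappa}$ for the median (applying Chebyshev with threshold $\frac{\kappa|\mods(q)|}{1+\kappa}$, which the same variance bound handles with the given $n_s$); then $\hat\rho\in\Delta(q)$ directly, and Claim~\ref{claim:hat_rho} handles the $\min(\rho,\hat\rho)$. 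With this correction your argument goes through; the paper in fact obtains $\frac{1}{16|B|}$ per node (hence a global $\frac{1}{16}$), which it later reuses in the proof of Lemma~\ref{lemma:proba_S(q)}.
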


\begin{lemma}\label{lemma:proba_S(q)}
The probability that $\approxMCnFBDDcore^*(B,n,n_s,n_t,\theta)$ constructs sets $S^r(q)$ such that $|S^r(q)| \geq \theta$ for some $q \in B$ is at most $1/8$.
\end{lemma}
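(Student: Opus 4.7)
The plan is to combine the random-process framework of Section~\ref{section:random_process} with Lemma~\ref{lemma:proba_p(q)} (which controls the $p(q)$'s) via Markov's inequality. The intuition is that when $p(q)$ is close to $|\mods(q)|^{-1}$, each sample set $S^r(q)$ has expected size close to $1$, so summing over all $O(|B|)$ nodes and all $n_sn_t$ copies yields a total expected size that is dwarfed by $\theta$.

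First, I would establish the conditional-expectation identity for $|S^r(q)|$. By Lemma~\ref{lemma:not_so_black_magic}, the joint distribution of $(H(q), p(q), S^r(q))$ coincides with that of $(H(q), p(q), \vY{r}{H(q)}{p(q)}{q})$, and by Fact~\ref{fact:independence} the variable $\vY{r}{h}{t}{q}$ is independent of $(H(q), p(q))$. Combining this with Lemma~\ref{lemma:proba_first_order} and linearity of expectation,
\[
    E[|S^r(q)| \mid H(q)=h,\, p(q)=t] \;=\; E\bigl[|\vY{r}{h}{t}{q}|\bigr] \;=\; t \cdot |\mods(q)|.
\]

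Second, I would introduce the one-sided good event $G_q = \{p(q) \leq (1+\kappa)/|\mods(q)|\}$ and $G = \bigcap_{q \in B} G_q$. Since $G$ contains the two-sided good event of Lemma~\ref{lemma:proba_p(q)}, $\Pr[\bar G] \leq 1/8$. Because $G_q$ is measurable with respect to $(H(q), p(q))$, the tower rule gives
\[
    E\bigl[|S^r(q)|\cdot \mathbbm{1}_{G_q}\bigr] \;=\; E\bigl[p(q)\,|\mods(q)|\cdot \mathbbm{1}_{G_q}\bigr] \;\leq\; 1 + \kappa,
\]
and summing over all nodes (at most $|B|$ of them) and all $n_sn_t$ copies yields
\[
    E\!\left[\sum_{q,r} |S^r(q)|\cdot \mathbbm{1}_G\right] \;\leq\; (1+\kappa)\cdot n_s n_t\cdot |B|.
\]

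Third, Markov's inequality applied to this nonnegative sum, together with $\theta = 16\,n_s n_t (1+\kappa)|B|$, gives
\[
    \Pr\!\left[\exists\, q,r\colon |S^r(q)| \geq \theta,\ G\right] \;\leq\; \Pr\!\left[\sum_{q,r}|S^r(q)|\cdot\mathbbm{1}_G \geq \theta\right] \;\leq\; \frac{(1+\kappa)n_sn_t|B|}{\theta}.
\]
A union bound with $\Pr[\bar G]$ then yields the claimed bound. The main subtle point is step one: the naive expectation $E[|S^r(q)|] = E[p(q)\,|\mods(q)|]$ is hard to control without the coupling, since $p(q)$ is itself a complicated random variable whose dependence on the sample sets at $\lor$-nodes (via the median-of-means $\hat\rho$) makes a direct conditional argument awkward. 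The random-process variables $\vY{r}{h}{t}{q}$ are precisely what allow the tower rule to go through: they provide, for each fixed realization $(h,t)$, an independent copy whose size has the clean expectation $t|\mods(q)|$, so that Lemma~\ref{lemma:proba_first_order} can be invoked even though the algorithm's $p(q)$ is random.
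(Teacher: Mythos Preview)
Your argument is essentially the paper's own proof, packaged via the tower rule rather than an explicit sum over histories; the use of the one-sided event $G_q$ and of Markov on the total sum rather than termwise are cosmetic variations. There is, however, a genuine arithmetic gap in the last line. With $\theta = 16n_sn_t(1+\kappa)|B|$ your Markov step gives
\[
\Pr\!\left[\exists\,q,r\colon |S^r(q)|\ge\theta,\ G\right]\ \le\ \frac{(1+\kappa)n_sn_t|B|}{\theta}\ =\ \frac{1}{16},
\]
so the union bound yields $\tfrac{1}{16}+\Pr[\bar G]$. But the \emph{statement} of Lemma~\ref{lemma:proba_p(q)} only gives $\Pr[\bar G]\le \tfrac{1}{8}$, and $\tfrac{1}{16}+\tfrac{1}{8}=\tfrac{3}{16}>\tfrac{1}{8}$, which does not establish the claimed bound.

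The fix is exactly what the paper does: the proof of Lemma~\ref{lemma:proba_p(q)} in fact shows the stronger bound $\Pr\!\big[\bigcup_q p(q)\notin\Delta(q)\big]\le\tfrac{1}{16}$ (the $\tfrac{1}{8}$ in its statement is deliberately slack, precisely so that the remaining $\tfrac{1}{16}$ can be spent here). Invoking that $\tfrac{1}{16}$ bound for $\Pr[\bar G]$ gives $\tfrac{1}{16}+\tfrac{1}{16}=\tfrac{1}{8}$ and closes the argument. So your plan is correct once you either reprove the $\tfrac{1}{16}$ bound or cite it from the proof of Lemma~\ref{lemma:proba_p(q)} rather than its statement.
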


\noindent The combination of Lemmas~\ref{lemma:proba_p(q)} and~\ref{lemma:proba_S(q)} yields Lemma~\ref{lemma:main_result_core}. 

\begin{proof}[Proof of Lemma~\ref{lemma:main_result_core}]
Let $\texttt{A}^{(\ast)} = \approxMCnFBDDcore^{(\ast)}(B,n,n_s,n_t,\theta)$.
\[
\begin{aligned}
\Pr_{\texttt{A}}\left[\mathsf{est} \not\in (1 \pm \varepsilon)|B^{-1}(1)|\right] 
&= \Pr_{\texttt{A}^*}\left[
	\bigcup_{r,q} |S^r(q)| \geq \theta
\right] + \Pr_{\texttt{A}^*}\left[
	\bigcap_{r,q} |S^r(q)| < \theta \text{ and } \mathsf{est} \not\in (1 \pm \varepsilon)|B^{-1}(1)|
\right] 
\\
&\leq \Pr_{\texttt{A}^*}\left[
	\bigcup_{r,q} |S^r(q)| \geq \theta
\right] + \Pr_{\texttt{A}^*}\left[
		\mathsf{est} \not\in (1 \pm \varepsilon)|B^{-1}(1)|
\right]  
\\
&\leq \Pr_{\texttt{A}^*}\left[
	\bigcup_{r,q} |S^r(q)| \geq \theta
\right] + \Pr_{\texttt{A}^*}\left[
	\bigcup_{q}
			p(q) \not\in \frac{1}{(1 \pm \varepsilon)|\mods(q)|} 
\right]
\end{aligned}
\]
where $q$ ranges over $B$'s nodes and $r$ ranges in $[n_sn_t]$. The parameter $\kappa$ has been set so that $p(q) \not\in \frac{1 \pm \kappa}{|\mods(q)|}$ implies $p(q) \not\in \frac{1}{(1 \pm \varepsilon)|\mods(q)|}$ so, using Lemmas~\ref{lemma:proba_p(q)} and~\ref{lemma:proba_S(q)}: 
\[
\Pr_{\texttt{A}}\left[\mathsf{est} \not\in (1 \pm \varepsilon)|B^{-1}(1)|\right] 
\leq \Pr_{\texttt{A}^*}\left[
	\bigcup_{r,q} |S^r(q)| \geq \theta
\right] + \Pr_{\texttt{A}^*}\left[
	\bigcup_{q}
			p(q) \not\in \frac{1 \pm \kappa}{|\mods(q)|} 
\right] \leq \frac{1}{4} 
\]
The algorithm stops whenever the number of samples grows beyond $\theta$ so for the worst-case running time the number of samples is less than $\theta$. Each node goes once through $\estimateAndSample$. For a decision node, $\estimateAndSample$ takes time $O(n_sn_t\theta)$. For a $\lor$-node $q$, $\estimateAndSample$ calls $\union$ $n_sn_t$ times, does a median of means where it computes the median of $n_t$ means of $n_s$ integers, and updates the sample sets. Updating the sample sets takes time $O(n_sn_t\theta)$. Computing each means takes $O(n_s)$ elementary operations and computing the median takes $O(n_t)$ elementary operations by a median of medians procedure. For each sample, the $\union$ tests whether it is a model of the children of $q$, model checking is a linear-time operation on nFBDD so the total cost of one $\union$ is $O(|\children(q)|\cdot|B| \cdot \theta)$. So the total cost of $\estimateAndSample$ for all $\lor$-nodes is at most $O(n_sn_t \theta |B| \cdot \sum_q|\children(q)|) = O(n_sn_t \theta  |B|^2)$ because $|B|$ counts the edges of $B$.
\end{proof}

It remains to prove Lemmas~\ref{lemma:proba_p(q)} and~\ref{lemma:proba_S(q)}.

\subsection{Proof of Lemma~\ref{lemma:proba_p(q)}}

Let $\Delta(q)$ be the interval $\frac{1 \pm \kappa}{|\mods(q)|}$ and $\nabla(q)$ be the interval $\frac{|\mods(q)|}{1 \pm \kappa}$. We want all $p(q)$ to be in $\Delta(q)$. Since the algorithm rounds values when computing $p(q)$, we first explain that we cannot leave $\Delta(q)$ because rounding. $\round(q,\infty)$ returns $\infty$. For $v > 0$, $\round(q,v)$ returns the smallest value greater or equal to $v$ that is of the form $\frac{1 + \kappa}{l}$ for some integer $l$ between $1$ and $2^n$. $\round(q,v)$ returns $\infty$ if no such value exists.

\begin{lemma}\label{lemma:rounding_is_again}
Let $q$ be a node in the nFBDD if $v \in \Delta(q)$ then $\round(q,v) \in \Delta(q)$.
\end{lemma}
\begin{proof}
By convention $\Delta(q) = \{\infty\}$ when $\mods(q) = \emptyset$ so the lemma holds for $v = \infty$ and every node $q$ from which there is no path reaching the $1$-sink. We now consider the other nodes $q$ and thus have $|\mods(q)| \geq 1$. Hence $1 \leq |\mods(q)| \leq 2^n$. Since $\round(q,v)$ is chosen to be minimal, we have that $v  \leq \frac{1 + \kappa}{|\mods(q)|}$ implies $v \leq \round(q,v) \leq \frac{1 + \kappa}{|\mods(q)|}$. Thus $v \in \Delta(q)$ implies $\round(q,v) \in \Delta(q)$.
\end{proof}

\begin{lemma}\label{lemma:there_is_a_fist_bad_event}
The event 
$
		\bigcup_{q \in B}
			\left(p(q) \not\in \Delta(q)
\right)
$
occurs if and only if the following event occurs: 
\[
		\bigcup_{q \in L_{> 0}}\left(
			p(q) \not\in \Delta(q) \text{ and for all } q' \in \descendants(q),\, p(q') \in \Delta(q')
\right)
\]
\end{lemma}
\begin{proof}
The ``if'' direction is trivial. For the other direction, suppose that $p(q) \not\in \Delta(q)$ holds for some $q$. Let $i$ be the smallest integer such that there is $q \in L_i$ and $p(q) \not\in \Delta(q)$. $i$ cannot be $0$ because the only node in $L_0$ is the $1$-sink and $p(1\text{-sink}) = 1 = |\mods(1\text{-sink})|^{-1} \in \Delta(1\text{-sink})$. So $q \in L_{> 0}$ and, by minimality of $i$, we have that $p(q') \in \Delta(q')$ for all $q' \in \descendants(q)$.
\end{proof}
\noindent By Lemma~\ref{lemma:there_is_a_fist_bad_event}
\begin{equation*}
\begin{aligned}
\Pr\bigg[
		\bigcup_{q \in B}
			p(q) \not\in \Delta(q)
\bigg] =
&\Pr\bigg[
		\bigcup_{q \in L_{> 0}}
			p(q) \not\in \Delta(q) 
			\text{ and } \forall q' \in
			 \descendants(q),\, p(q') \in
			  \Delta(q')
\bigg]
\\
\leq &
\sum_{q \in L_{> 0}}\underbrace{\Pr\left[
			p(q) \not\in \Delta(q) 
			\text{ and } \forall q' \in \descendants(q),\, p(q') \in
			  \Delta(q')
\right]}_{P(q)}.
\end{aligned}
\end{equation*}

We bound $P(q)$ from above. The case when $q = ite(x,q_1,q_0)$ is a decision node is easy by construction; if $p(q_0) \in \Delta(q_0)$ and $p(q_1) \in \Delta(q_1)$ then $\frac{1}{p(q_0)}+ \frac{1}{p(q_1)} \in \frac{|\mods(q_0)|+|\mods(q_1)|}{1\pm \kappa} = \frac{|\mods(q)|}{1 \pm \kappa} = \nabla(q)$ with probability $1$ so, by Lemma~\ref{lemma:rounding_is_again}, $p(q) = \round(q,(\frac{1}{p(q_0)}+ \frac{1}{p(q_1)})^{-1}) \in \Delta(q)$ with probability $1$. Thus $P(q) = 0$ when $q$ is a decision node and only the case of $\lor$-nodes remains.

\subsubsection*{Moving to the random process}

Consider the set $\calH_q$ of realizable histories for $q$ and denote by $H(q) = h$ the event that the history $h$ occurs for $q$, that is, the event that the algorithm sets $p(q')$ to $h(q')$ for all $q' \in \descendants(q)$.
\begin{align*}
P(q) &= \Pr\bigg[
			\bigcup_{h \in \calH_q} H(q)=h \text{ and }
				p(q) \not\in \Delta(q) 
				\text{ and for all } q' \in
				 \descendants(q),\, p(q') \in
				  \Delta(q')
		\bigg]
\\
	&\leq \sum_{h \in \calH_q} \Pr\left[H(q)=h \text{ and }
				p(q) \not\in \Delta(q) 
				\text{ and for all } q' \in
				 \descendants(q),\, p(q') \in
				  \Delta(q')
		\right]
\end{align*}
If $H(q) = h$ then the event $p(q') = h(q')$ holds for all $q' \in \descendants(q)$. So the summand probabilities are zero when $h(q')$ is not in $\Delta(q')$ for any descendant of $q$. Let $\calH^*_q$ be the subset of $\calH_q$ where $h(q') \in \Delta(q')$ holds for every $q' \in \descendants(q)$. Then
\begin{equation*}
\begin{aligned}
P(q) &\leq \sum_{h \in \calH^*_q} \Pr\left[H(q)=h \text{ and }
				p(q) \not\in \Delta(q) 
				\text{ and for all } q' \in
				 \descendants(q),\, p(q') \in
				  \Delta(q')
		\right]
\\
	&= \sum_{h \in \calH^*_q} \Pr\left[H(q)=h \text{ and }
				p(q) \not\in \Delta(q)
		\right].
\end{aligned} 
\end{equation*}

\begin{lemma}\label{lemma:number_of_histories}
For every node $q$ it holds that $|\calH^*_q| \leq 2^{n|B|}$.
\end{lemma}
\begin{proof}
It suffices to note that there are at most $2^n$ possible values for $h(p)$ for every node $p$ and to use that there are at most $|B|$ descendants of $q$. For the former claim, if $q$ is the $1$-sink then $h(q)$ can only be $1$. Otherwise, $h(q)$ can only be of the form $\frac{1+\kappa}{l}$ for $l$ an integer between $1$ and $2^n$, so clearly there are only $2^n$ possible values.
\end{proof}

Let $\rho_h(q) = \min(h(q_1),\dots,h(q_k))$. Note that $\rho(q)$ is a random variable whereas $\rho_h(q)$ is a constant. If $H(q) = h$ then the event $\rho(q) = \rho_h(q)$ occurs and thus $p(q) = \round(q,\textsf{min}(\rho_h(q),\hat \rho(q)))$, where $\hat \rho = \median(M_1(q),\dots,M_{n_t}(q))^{-1}$. First, we show that it is enough to focus on $\hat \rho(q)$.

\begin{lemma}\label{lemma:hat_rho}
If $h \in \calH^*_q$ then $(\hat \rho(q) \in \Delta(q)) \land (H(q) = h)$ implies that $(p(q) \in \Delta(q)) \land (H(q) = h)$.
\end{lemma}
\begin{proof}
$H(q) = h$ implies $p(q) = \round(q,\textsf{min}(\rho_h(q),\hat \rho(q)))$.  By Lemma~\ref{lemma:rounding_is_again} it suffices to show that $(\hat \rho \in \Delta(q)) \land (H(q) = h)$ implies $(\textsf{min}(\rho_h(q),\hat \rho(q)) \in \Delta(q)) \land (H(q) = h)$. If $\hat \rho(q) \leq \rho_h(q)$ then the implication holds trivially. If otherwise $\hat \rho(q) > \rho_h(q)$ then $\hat \rho \in \Delta(q)$ implies $\textsf{min}(\rho_h(q),\hat \rho(q)) \leq \hat \rho(q) \leq \frac{1 + \kappa}{|\mods(q)|}$ and, since $h \in \calH^*_q$ guarantees that $\rho_h(q) \geq \frac{1-\kappa}{\max_{j \in [k]} |\mods(q_j)|} \geq \frac{1-\kappa}{|\mods(q)|}$ we conclude that $\textsf{min}(\rho_h(q),\hat \rho(q)) \in \Delta(q)$.
\end{proof}
Let $\mathfrak{M}_{j,h}(q) = \frac{1}{\rho_h(q)n_s}\sum_{r = (j-1) n_s+1}^{j\cdot n_s}|\vZ{r}{h}{q}|$. Lemma~\ref{lemma:not_so_black_magic} implies that the probability of the event $(H(q) = h) \land  (\median_j(M_j(q)) \not\in \nabla(q))$ is upper-bounded by that of $ \median_j(\mathfrak{M}_{j,h}(q)) \not\in \nabla(q)$.
\begin{align*}
\Pr&\left[H(q)=h \text{ and }  p(q) \not\in \Delta(q) \right]
\leq
\Pr\left[H(q)=h \text{ and } \hat \rho(q) \not\in \Delta(q) \right] \tag{Lemma~\ref{lemma:hat_rho}}
\\
&=
\Pr\left[H(q)=h \text{ and } \median(M_1(q),\dots,M_{n_t}(q)) \not\in \nabla(q) \right] \tag{by definition of $\hat{\rho}(q)$}
\\
&\leq \Pr\left[\median(\mathfrak{M}_{1,h}(q),\dots,\mathfrak{M}_{n_t,h}(q)) \not\in \nabla(q)\right]  \tag{Lemma~\ref{lemma:not_so_black_magic}}
\end{align*}
We have now replaced variables by their counterpart in the random process. We use Chebyshev's inequality and  Hoeffding bound to bound $\Pr\left[\median_j(\mathfrak{M}_{j,h}(q)) \not\in \nabla(q)\right]$ from above.

\subsubsection*{Variance upper bound}

By Lemma~\ref{lemma:proba_first_order}, the expected value of $|\vZ{r}{h}{q}|$ is $\mu = \rho_h(q) |\mods(q)|$. Now for the variance,
\begin{align*}
\Va\left[|\vZ{r}{h}{q}| \right] &\leq \Ex\left[|\vZ{r}{h}{q}|^2 \right] = \mu + \sum_{\substack{\alpha,\alpha' \in \mods(q) \\ \alpha \neq \alpha'}} \Pr\left[\alpha \in \vZ{r}{h}{q} \text{ and } \alpha' \in \vZ{r}{h}{q}\right]
\\
&= \mu + \sum_{\substack{\alpha,\alpha' \in \mods(q) \\ \alpha \neq \alpha'}} \Pr\left[\alpha \in \vZ{r}{h}{q} \mid \alpha' \in \vZ{r}{h}{q}\right]\Pr\left[\alpha' \in \vZ{r}{h}{q}\right] 
\\
&\leq \mu + \sum_{\substack{\alpha,\alpha' \in \mods(q) \\ \alpha \neq \alpha'}} \frac{\rho_h(q)^2}{h(\lcpn(\paths(q,\alpha),\paths(q,\alpha'))} \tag{Lemmas~\ref{lemma:proba_first_order} and~\ref{lemma:proba_second_order}}
\end{align*}
Let $\calP = \paths(\alpha,q)$ and $V(\calP) = (q^0_\alpha,q^1_\alpha,q^2_\alpha,\dots,q^{i-1}_\alpha,q)$, with $q^0_\alpha = 1\text{-sink}$. Let $\calP' = \paths(\alpha',q)$ for any $\alpha' \in \mods(q)$ distinct from $\alpha$. Then $\lcpn(\calP,\calP')$ is one of the $q^j_\alpha$. Recall that $I(\alpha,q,j)$ is the set of $\alpha' \in \mods(q)$ such that $\lcpn(\calP,\calP') = q^j_\alpha$.
\begin{align*}
\sum_{\substack{\alpha,\alpha' \in \mods(q) \\ \alpha \neq \alpha'}} \frac{\rho_h(q)^2}{h(\lcpn(\paths(q,\alpha),\paths(q,\alpha'))} = \sum_{\alpha \in \mods(q)} \sum_{j \in [0,i-1]} |I(\alpha,q,j)|\frac{\rho_h(q)^2}{h(q^j_\alpha)} 
\\
\leq \sum_{\alpha \in \mods(q)} \sum_{j \in [0,i-1]} \frac{\rho_h(q)^2 |\mods(q)|}{|\mods(q^j_\alpha)|\cdot h(q^j_\alpha)} \qquad \tag{Lemma~\ref{lemma:derivation_path}}
\end{align*}
Because $h \in \calH^*_q$ and $h(q^j_\alpha)$ is in $\Delta(q^j_\alpha)$ we have $|\mods(q^j_\alpha)|\cdot h(q^j_\alpha) \geq 1 - \kappa$.
\begin{align*}
\sum_{\alpha \in \mods(q)} \sum_{j \in [0,i-1]} \frac{\rho_h(q)^2 |\mods(q)|}{|\mods(q^j_\alpha)|\cdot h(q^j_\alpha)} 
&\leq \frac{\rho_h(q)^2}{1 - \kappa}\sum_{\alpha \in \mods(q)} \sum_{j \in [0,i-1]} |\mods(q)| 
\leq \frac{\mu^2n}{1 - \kappa}
\end{align*}
Putting everything together, we conclude that 
$
\Va[|\vZ{r}{h}{q}|] \leq \mu + \frac{\mu^2n}{1 - \kappa}.
$

\subsubsection*{Median of means}

We have that $\Ex[\mathfrak{M}_{j,h}(q)] = \frac{\mu}{\rho_h(q)} = |\mods(q)|$ and, by independence of the variables $\{\vZ{r}{h}{q}\}_r$ 
$$
\Va[\mathfrak{M}_{j,h}(q)] = \sum_{r = j\cdot n_s+1}^{(j+1)n_s} \frac{\Va[|\vZ{r}{h}{q}|]}{\rho_h(q)^2n_s^2}\leq \frac{1}{\rho_h(q)^2n_s} \left( \mu + \frac{\mu^2n}{1 - \kappa}\right) =   \frac{1}{n_s}\left(\frac{|\mods(q)|}{\rho_h(q)} + \frac{n |\mods(q)|^2}{1-\kappa}\right).
$$
Now, $\mathfrak{M}_{j,h}(q) \in \nabla(q) = \frac{|\mods(q)|}{1 \pm \kappa}$ occurs if and only if $\frac{-\kappa|\mods(q)|}{1 + \kappa} \leq \mathfrak{M}_{j,h}(q) - |\mods(q)| \leq \frac{\kappa|\mods(q)|}{1 - \kappa}$, which is subsumed by $|\mathfrak{M}_{j,h}(q) - |\mods(q)|| \leq \frac{\kappa|\mods(q)|}{1+\kappa}$. So Chebyshev's inequality gives 
\begin{align*}
\Pr\left[\mathfrak{M}_{j,h}(q) \notin \frac{|\mods(q)|}{1 \pm \kappa}\right] &\leq \Pr\left[\big|\mathfrak{M}_{j,h}(q) - |\mods(q)|\big| > \frac{\kappa|\mods(q)|}{1+\kappa}\right] \leq \frac{(1+\kappa)^2}{\kappa^2 |\mods(q)|^2} \Va\left[\mathfrak{M}_{j,h}(q)\right]
\\
&\leq \frac{(1+\kappa)^2}{\kappa^2n_s}\left(\frac{1}{\rho_h(q)|\mods(q)|} + \frac{n}{1-\kappa}\right)
\\
&\leq \frac{(1+\kappa)^2}{\kappa^2n_s}\left(\frac{1}{1 - \kappa} + \frac{n}{1-\kappa}\right) \tag{$\rho_h(q) \geq \frac{1- \kappa}{\max_{j}|\mods(q_j)|} \geq \frac{1- \kappa}{|\mods(q)|}$}
\\
&\leq \frac{2n}{\kappa^2n_s} \tag{$\frac{(1+\kappa)^2}{1-\kappa}$ decreases to  1}
\\
&\leq \frac{1}{4} \tag{$n_s \geq \frac{4n}{\kappa^2}$}
\end{align*}
By taking the median, we decrease the $\frac{1}{4}$ upper bound to a much smaller value. Let $E_j$ be the indicator variable taking value~$1$ if and only if $\mathfrak{M}_{j,h}(q) \not\in \nabla(q)$ and let $\bar E = \sum_{j = 1}^{n_t} E_j$. We have $\Ex[\bar E] \leq \frac{n_t}{4}$ so Hoeffding bound gives
\begin{align*}
\Pr\left[ \median(\mathfrak{M}_{1,h}(q),\dots,\mathfrak{M}_{n_t,h}(q)) \not\in \nabla(q) \right] 
= 
\Pr\left[ \bar E > \frac{n_t}{2}\right] 
\leq 
\Pr\left[ \bar E - \Ex(\bar E) \geq \frac{n_t}{4}\right] 
\leq e^{-n_t/8} \leq \frac{2^{-n|B|}}{16|B|}
\end{align*}
where the last inequality comes from $n_t = 16n|B| \geq 8n|B| + 8\ln(16|B|)$. Using Lemma~\ref{lemma:number_of_histories} we et
\begin{align*}
P(q) \leq \sum_{h \in \calH^*_q} \Pr\left[ \median(\mathfrak{M}_{1,h}(q),\dots,\mathfrak{M}_{n_t,h}(q)) \not\in \nabla(q) \right] \leq \sum_{h \in \calH^*_q} \frac{2^{-n|B|}}{16|B|} \leq  \frac{1}{16|B|}.
\end{align*}
Finally we obtain the lower bound of Lemma~\ref{lemma:proba_p(q)}:
$$
\Pr\left[
		\bigcup_{q \in B}
			p(q) \not\in \Delta(q)
\right] \leq \sum_{q \in B} P(q) \leq \frac{1}{16}
$$

\subsection{Proof of Lemma~\ref{lemma:proba_S(q)}}

We first bound $\Pr\left[\bigcup_{r,q} |S^r(q)| \geq \theta\right]$ from above by 
\begin{align*}
\Pr\left[\bigcup_{r,q} |S^r(q)| \geq \theta\right] &\leq \Pr\left[\bigcup_{r,q} |S^r(q)| \geq \theta \text{ and } \bigcap_{q' \in B} p(q') \in \frac{1 \pm \kappa}{|\mods(q')|}\right] + \Pr\left[\bigcup_{q' \in B} p(q') \not\in \frac{1 \pm \kappa}{|\mods(q')|}\right]
\\
&\leq \Pr\left[\bigcup_{r,q} |S^r(q)| \geq \theta \text{ and } \bigcap_{q' \in B} p(q') \in \Delta(q')\right] + \frac{1}{16} \tag{Lemma~\ref{lemma:proba_p(q)}}
\\
&\leq 
\sum_{r,q}\Pr\left[|S^r(q)| \geq \theta  \text{ and } \bigcap_{q' \in B} p(q') \in \Delta(q')\right] + \frac{1}{16}
\tag{union bound}
\\
&\leq 
\sum_{r,q}\Pr\left[|S^r(q)| \geq \theta  \text{ and } p(q) \in \Delta(q)\right] + \frac{1}{16}
\\
&\leq 
\sum_{r,q}\Pr\left[|S^r(q)|\mathbbm{1}( p(q) \in \Delta(q)) \geq \theta  \right] + \frac{1}{16} \tag{$\theta > 0$}
\\
&\leq 
\sum_{r,q}\Ex\left[|S^r(q)|\mathbbm{1}( p(q) \in \Delta(q))\right] + \frac{1}{16} \tag{Markov's inequality}
\end{align*}
We use the following result to finish the proof of Lemma~\ref{lemma:proba_S(q)}.
\begin{lemma}\label{lemma:white_magic_lemma}
Let $q \in B$ with $\mods(q) \neq \emptyset$. For every $r \in [n_sn_t]$ and $\alpha \in \mods(q)$, $\Ex\left[\frac{\mathbbm{1}(\alpha \in S^r(q))}{p(q)}\right] = 1$.
\end{lemma}
One can see $\Ex\left[\frac{\mathbbm{1}(\alpha \in S^r(q))}{p(q)}\right] = 1$ as a correct variant of ``$\Pr[\alpha \in S^r(q)] = p(q)$'' (which is fallacious as it states that a fixed probability equals a random variable). Let us assume Lemma~\ref{lemma:white_magic_lemma} for now. If $q$ is the $0$-sink then $|S^r(0\text{-sink})| = 0$ so $\Ex\left[|S^r(0\text{-sink})|\mathbbm{1}( p(0\text{-sink}) \in \Delta(0\text{-sink}))\right] = 0$. Now suppose $\mods(q) \neq \emptyset$.
\begin{align*}
\Ex\left[|S^r(q)|\mathbbm{1}( p(q) \in \Delta(q))\right] &= \sum_{\alpha\in \mods(q)} \Ex\left[\mathbbm{1}(\alpha \in S^r(q))\mathbbm{1}( p(q) \in \Delta(q))\right] 
\\
&= \sum_{\alpha\in \mods(q)} \Ex\left[\frac{\mathbbm{1}(\alpha \in S^r(q))}{p(q)}\cdot p(q)\mathbbm{1}( p(q) \in \Delta(q))\right] 
\end{align*}
The random variable $p(q)\mathbbm{1}( p(q) \in \Delta(q))$ is always less than the upper limit of $\Delta(q)$, i.e., $\frac{1 + \kappa}{|\mods(q)|}$. Hence
\begin{align*}
\Ex\left[|S^r(q)|\mathbbm{1}( p(q) \in \Delta(q))\right] &\leq \frac{1 + \kappa}{|\mods(q)|}\sum_{\alpha\in \mods(q)} \Ex\left[\frac{\mathbbm{1}(\alpha \in S^r(q))}{q}\right] 
= 1 + \kappa \tag{Lemma~\ref{lemma:white_magic_lemma}}
\end{align*}
Going back to the bound on $\Pr\left[\bigcup_{r,q} |S^r(q)| \geq \theta\right]$ we get
\begin{align*}
\Pr\left[\bigcup_{r,q} |S^r(q)|  \geq \theta\right] \leq \frac{1}{\theta} \sum_{r,q} (1 +\kappa) + \frac{1}{16} = \frac{(1+\kappa)n_sn_t |B|}{\theta} + \frac{1}{16} \leq \frac{1}{8} \tag{$\theta = \lceil 16(1 +\kappa)n_sn_t|B|\rceil$}
\end{align*}
Thus it only remains to show the correctness of Lemma~\ref{lemma:white_magic_lemma} to finish the proof of Lemma~\ref{lemma:proba_S(q)}.
\begin{proof}[Proof of Lemma~\ref{lemma:white_magic_lemma}]
We proceed by induction on the height of $q$. The base case is that of the $1$-sink. We have that $p(1\text{-sink}) = 1$ and $S^r(1\text{-sink}) = \{\alpha_\emptyset\} = \mods(1\text{-sink})$ so it is indeed verified that 
$$
\Ex\left[\frac{ \mathbbm{1}(\alpha_\emptyset \in S^r(1\text{-sink}))}{p(1\text{-sink})}\right] = 1
$$
For the inductive case, we distinguish the situation where $q$ is a decision from that where $q$ is a $\lor$ node. 

Suppose $q$ is a decision node $ite(x,q_1,q_0)$. Let $b = \alpha(x)$ then and $\alpha_b$ be the restriction of $\alpha$ to $\var(q_b)$.
$$
\frac{\mathbbm{1}(\alpha \in S^r(q))}{p(q)} = \frac{\mathbbm{1}(\alpha \in \reduce(S^r(q_b),\frac{p(q)}{p(q_b)}))}{p(q)}
$$
Let $A$ be the finite set of all possible values for $p(q)$ and $B$ be the finite set of all possible values for $p(q_b$). Let $A(\alpha_b) \subseteq A$ be the set of values $t$ such that $\Pr[p(q) = t,\, \alpha_b \in S^r(q_b)] \neq 0$. Let $B(\alpha_b,t)$ be the set of values $t'$ such that $\Pr[p(q_b) = t',\, p(q) = t,\, \alpha_b \in S^r(q_b)] \neq 0$.
\begin{align*}
\Ex&\left[\frac{\mathbbm{1}(\alpha \in S^r(q))}{p(q)}\right]
= \sum_{t \in A} \frac{1}{t}\Pr\left[\alpha \in S^r(q) ,\,p(q) = t\right]
\\
&= \sum_{t \in A(\alpha_b)} \sum_{t' \in B(\alpha_b,t)} \frac{1}{t}\Pr\left[\alpha_b \in \reduce\left(S^r(q_b),\frac{t}{t'}\right) \bigg|\,\,\begin{matrix} p(q_b) = t' ,\,p(q) = t, \\ \alpha_b \in S^r(q_b) \end{matrix}\right]\Pr\left[\begin{matrix} p(q_b) = t' ,\,p(q) = t, \\ \alpha_b \in S^r(q_b) \end{matrix}\right]
\\
&= \sum_{t \in A(\alpha_b)} \sum_{t' \in B(\alpha_b,t)} \frac{1}{t'}\cdot\Pr\left[p(q_b) = t' ,\,p(q) = t, \, \alpha_b \in S^r(q_b)\right]
\\
&= \sum_{t \in A} \sum_{t' \in B} \frac{1}{t'}\Pr\left[p(q_b) = t' ,\,p(q) = t, \, \alpha_b \in S^r(q_b)\right]
\\
&= \sum_{t' \in B} \frac{1}{t'}\sum_{t \in A} \Pr\left[p(q_b) = t' ,\,p(q) = t, \, \alpha_b \in S^r(q_b)\right] 
\\
&= \sum_{t' \in B} \frac{1}{t'} \Pr\left[p(q_b) = t', \, \alpha_b \in S^r(q_b)\right] = \Ex\left[\frac{\mathbbm{1}(\alpha_b \in S^r(q_b))}{p(q_b)}\right]
\end{align*}
By induction $\Ex\left[\frac{\mathbbm{1}(\alpha_b \in S^r(q_b))}{p(q_b)}\right] = 1$ so we are done.

We move to the case where $q$ is a $\lor$ node. Then 
$$
\frac{\mathbbm{1}(\alpha \in S^r(q))}{p(q)} = \frac{\mathbbm{1}(\alpha \in \reduce(\hat S^r(q_b),\frac{p(q)}{\rho(q)}))}{p(q)}
$$
Let $C$ be the finite set of all possible values for $\rho(q)$ and $C(\alpha,t)$ the set of values $t'$ such that $\Pr[\rho(q) = t',\,p(q) = t,\,\alpha \in \hat S^r(q)] \neq 0$. The same analysis as before with $p(q_b) = t'$ replaced by $\rho(q) = t'$, $\alpha_b \in S^r(q_b)$ replaced by $\alpha \in S^r(q)$, $B$ replaced by $C$ and $B(\alpha_b,t)$ replaced by $C(\alpha,t)$ gives
$$
\Ex\left[ \frac{\mathbbm{1}(\alpha \in S^r(q))}{p(q)} \right] = 
\Ex\left[ \frac{\mathbbm{1}(\alpha \in \hat S^r(q))}{\rho(q)} \right]
$$
There is a unique child $q'$ of $q$ such that $\alpha \in \hat S^r(q)$ if and only if $\alpha \in \reduce(S^r(q'),\frac{\rho(q)}{p(q')})$. Let $D$ be the finite set of all possible values for $p(q')$ and, for $t \in C$, $D(\alpha,t)$ be the set of values such that $\Pr[p(q') = t',\,\rho(q) = t,\,\alpha \in \hat S^r(q)] \neq 0$. Again, the same analysis as before shows that
$$
\Ex\left[ \frac{\mathbbm{1}(\alpha \in \hat S^r(q))}{\rho(q)} \right] = 
\Ex\left[ \frac{\mathbbm{1}(\alpha \in S^r(q'))}{p(q')} \right]
$$
The induction hypothesis gives us $\Ex\left[ \frac{\mathbbm{1}(\alpha \in S^r(q'))}{p(q')} \right] = 1$ so we are done.
\end{proof}
\section{Conclusion}

In this paper, we resolved the open problem of designing an FPRAS for \#nFBDD. Our work also introduces a new technique to quantify dependence, and it would be interesting to extend this technique to other languages that generalize nFBDD. Another promising direction for future work would be to improve the complexity of the proposed FPRAS to enable practical adoption.

\paragraph*{Acknowledgements}
Meel acknowledges the support of the Natural Sciences and Engineering Research Council
of Canada (NSERC), funding reference number RGPIN-2024-05956; de Colnet is supported by the
Austrian Science Fund (FWF), ESPRIT project FWF ESP 235. 
This research was initiated at Dagstuhl Seminar 24171 on ``Automated Synthesis:
Functional, Reactive and Beyond" (https://www.dagstuhl.de/24171). We gratefully acknowledge
the Schloss Dagstuhl – Leibniz Center for Informatics for providing an excellent environment and
support for scientific collaboration. This work was done in part while
de Colnet was visiting the University of Toronto. 

\newcommand{\etalchar}[1]{$^{#1}$}

\clearpage
\appendix

\section{Appendix}

\subsection{Proof of Lemma~\ref{lemma:not_so_black_magic}} 

\notBlackMagic*
\begin{proof}
	The event $H(q) = h$ fixes the values of $p(q')$ to $h(q')$ for every descendant $q'$ of $q$, thus
	\begin{align*}
		\Pr[H(q) = h \text{ and } e(\vec{\mathcal S}(q),\vec{p}(q))] &= \Pr[H(q) = h \text{ and } e(\vec{\mathcal S}(q),\vec{h}(q))]
	\end{align*}
	We rewrite $\approxMCnFBDDcore^*$ in a way that is more convenient. For that we define the following procedures.
	\begin{itemize}
	\item[•] If $q$ is a $\lor$ node
	\begin{itemize}
		\item[--] $\pUnion$ takes in $\vec{S}(c)$ and $p(c)$ for every child $c$ of $q$ and returns $\vec{Z}(q)$. I.e., it executes lines~\ref{line:rho} -- \ref{line:union} in Algorithm~\ref{algorithm:estimateAndSampleOr}.
		\item[--] $\pEstimate_\lor$ takes in $\vec{Z}(q)$ and $p(c)$ for every child $c$ of $q$ (to get $\rho(q)$) and returns $p(q)$. I.e., it executes lines 4 -- \ref{line:round} in Algorithm~\ref{algorithm:estimateAndSampleOr}.
		\item[--] $\pReduce_\lor$ takes in $\vec{Z}(q)$, $p(q)$ and $p(c)$ for every child $c$ of $q$ and returns $\vec{S}(q)$. I.e., it executes lines 8 -- \ref{line:reduce} in Algorithm~\ref{algorithm:estimateAndSampleOr}. 
	\end{itemize} 
	\item[•] If $q$ is a decision node
	\begin{itemize}
	\item[--] $\pEstimate_d$ takes in $p(c)$ for every child $c$ of $q$ and returns $p(q)$. I.e., it executes lines~\ref{line:estimateDecision} in Algorithm~\ref{algorithm:estimateAndSampleDecision}
		\item[--] $\pReduce_d$ takes in $p(q)$, $p(c)$ and $\vec{S}(c)$ for every child $c$ of $q$ and returns $\vec{S}(q)$. I.e., it executes lines 2 -- \ref{line:reduceDecision}  in Algorithm~\ref{algorithm:estimateAndSampleDecision}.
	\end{itemize}
	\end{itemize}
	We make $\approxMCnFBDDcore^*$ record the history $H(q)$ of every node $q$. $H(q)$ is initially empty for every $q$. Before processing $q$, $H(q)$ is computed from the histories and the estimates of $q$'s children by $
	H(q) = compute\_history(H(c), p(c) \mid c \in \children(q))
	$. Recording the history does not modify the distribution of the other variables in $\approxMCnFBDDcore^*$. So $\approxMCnFBDDcore^*$ can be written as follows.\medskip
	
	\begin{algorithm}[H]
		\For{$q \in B$ in order $\prec$}{
			$H(q) = compute\_history(H(c), p(c) \mid c \in \children(q))$\\
			\If{$q$ is a $\lor$ node}{
				$\vec{Z}(q) = \pUnion(\vec{S}(c), p(c) \mid c \in \children(q))$\\ 
				$p(q) = \pEstimate_\lor(\vec{Z}(q), p(c) \mid c \in \children(q))$ \\
				$\vec{S}(q) = \pReduce_\lor(\vec{Z}(q),p(q), p(c) \mid  \in \children(q))$
			}
			\If{$q$ is a decision node}{
				$p(q) = \pEstimate_d(p(c) \mid c \in \children(q))$ \\
				$\vec{S}(q) = \pReduce_d(p(q), p(c),\vec{S}(c) \mid c \in \children(q))$
			}
		}
		\caption{$\approxMCnFBDDcore^*$ summarized}\label{algorithm:estimator_summarized}
	\end{algorithm}\medskip

$\calH_q$ is the finite set of all possible histories for $q$. $T_{q,h}$ is the set of possible values for $p(q)$ under an history $h \in \calH_q$. We introduce sets $\hat S^r_h(q)$ for every history $h \in \calH_q$  and sets $S^r_{h,t}(q)$ for every $h \in \calH_q$ and $t \in T_{q,h}$.\medskip
	
	\begin{algorithm}[H]
		\For{$q \in B$ in order $\prec$}{
			$H(q) = compute\_history(H(c), p(c) \mid c \in \children(q))$\\
			\For{$h \in \calH_q$}{
				\For{$t \in T_{q,h}$}{
					\If{$q$ is a $\lor$ node}{
						$\vec{Z}_h(q) = \pUnion(\vec{S}_{h_c}(c), h(c) \mid c \in \children(q))$\\
						$\vec{S}_{h,t}(q) = \pReduce_\lor(\vec{Z}_h(q),t, h(c) \mid  \in \children(q))$
					}
					\If{$q$ is a decision node}{
						$\vec{S}_{h,t}(q) = \pReduce_d(t, h(c),\vec{S}_{h_c}(c) \mid c \in \children(q))$
					}
				}
			}
						
			\If{$q$ is a $\lor$ node}{
				$p(q) = \pEstimate_\lor(\vec{Z}_{H(c)}(q), p(c) \mid c \in \children(q))$ \\
			}
			\If{$q$ is a decision node}{
				$p(q) = \pEstimate_d(p(c) \mid c \in \children(q))$
			}
		}
		\caption{}\label{algorithm:all_runs}
	\end{algorithm}\medskip
Algorithm~\ref{algorithm:all_runs} plays all possible runs of $\approxMCnFBDDcore^*$ for all possible values of $p(\cdot)$ in lines 1 and 3\,--\,9. To retrieve $\vec S(q)$ for any $q$ from Algorithm~\ref{algorithm:all_runs} it suffices to select $\vec S_{H(q),p(q)}(q)$ and retrieve $\vec Z(q)$ it suffices to select $\vec Z_{H(q)}(q)$. Thus, take $h \in \calH_q$ and let $\vec \calS_h(q)$ be the same as $\vec \calS(q)$ where $\vec Z(q)$ is replaced by $\vec Z_{h}(q)$ and where each where each $\vec S(q')$ and $\vec Z(q')$ for $q'$ a descendant of $q$ are replaced by $\vec S_{h_{q'},h(q')}(q')$ and $\vec Z_{h_{q'}}(q')$ with $h_{q'}$ the restriction of $h$ to the descendants of $q'$, then
	\begin{align*}
		\Pr[H(q) = h \text{ and } e(\vec{S}(q),\vec h(q))] = \Pr[H(q) = h \text{ and } e(\vec{\calS}_h(q),\vec h(q))] 
	\end{align*}
	If in Algorithm~\ref{algorithm:all_runs} we remove line 2 and lines 10--13 then we do not change the distribution of the variables $S^r_{h,t}(q)$ and $\hat S^r_{h}(q)$. Removing these lines transforms Algorithm~\ref{algorithm:all_runs} into the $\mathfrak{S}$-process, thus
	\begin{align*}
		\Pr[e(\vec{\calS}_{h}(t),\vec h(q))] = 
		\Pr[e(\vec{\mathfrak{S}}_h(q),\vec h(q))]
	\end{align*}
	It follows that
	\begin{align*}
		\Pr[H(v) = h \text{ and } e(\vec{\calS}(q),\vec{p}(q))] &= \Pr[H(v) = h \text{ and } e(\vec{\cal S}_h(q),\vec{h}(q))] 
		\\
		&\leq \Pr[e(\vec{\cal S}_h(q),\vec{h}(q))]
		= \Pr[e(\vec{\mathfrak{S}}_h(q),h,\vec h(q))]
	\end{align*}
\end{proof}

\subsection{Proofs of Lemmas~\ref{lemma:proba_first_order} and~\ref{lemma:proba_second_order}}

\probaFirstOrder*
\begin{proof}
Let $q \in L_i$. We proceed by induction on $i$. The base case $i = 0$ is immediate since $\vY{r}{h_\emptyset}{1}{1\text{-sink}} = \{\alpha_\emptyset\}$ and $\vY{r}{h_\emptyset}{\emptyset}{0\text{-sink}} = \emptyset$ are the only variables for $L_0$ (and $\frac{1}{\infty} = 0$ by definition). Now let $i > 0$, $q \in L_i$, and suppose that the statement holds for all $\vY{r}{h'}{t'}{q'}$ and $\alpha' \in \mods(q')$ with $q' \in L_{< i}$. If $i$ is odd then $q$ is a decision node $ite(x,q_1,q_0)$ with $q_0$ and $q_1$ in $L_{i-1}$ (because $B$ is alternating). Let $b = \alpha(x)$ and let $\alpha'$ be the restriction of $\alpha$ to $var(\alpha) \setminus \{x\}$. Then 
\begin{align*}
\Pr\left[\alpha \in \vY{r}{h}{t}{q}\right] 
&= \Pr\left[\alpha \in \reduce\left(\vY{r}{h_b}{t_b}{q_b},\frac{t}{t_b}\right) \otimes \{x \mapsto b\}\right] 
\\
&= \Pr\left[\alpha' \in \vY{r}{h_b}{t_b}{q_b}\right]\Pr\left[\alpha' \in \reduce\left(\vY{r}{h_b}{t_b}{q_b},\frac{t}{t_b}\right) \Big|\, \alpha' \in \vY{r}{h_b}{t_b}{q_b} \right]
\\
&= t_b \cdot \frac{t}{t_b} = t \tag{by induction}
\end{align*}
Now if $i$ is even then $q$ is a $\lor$-node with children $\children(q) = (q_1,\dots,q_k)$ all in $L_{i-1}$. Let $j$ be the smallest integer such that $\alpha \in \mods(q_j)$ and let $t_{\min} = \min(h(q_1),\dots,h(q_k))$. Then 
\begin{align*}
\Pr\left[\alpha \in \vZ{r}{h}{q}\right] 
&= \Pr\left[\alpha \in \reduce\left(\vY{r}{h_j}{t_j}{q_j},\frac{t_{\min}}{t_j}\right)\right] 
\\
&= \Pr\left[\alpha \in \vY{r}{h_j}{t_j}{q_j}\right]\Pr\left[\alpha \in \reduce\left(\vY{r}{h_j}{t_j}{q_j},\frac{t_{\min}}{t_j}\right) \Big|\, \alpha \in \vY{r}{h_j}{t_j}{q_j} \right]
\\
&= t_j \cdot \frac{t_{\min}}{t_j} = t_{\min} \tag{by induction}
\end{align*}
And for $t \leq t_{\min}$ we have that 
\begin{align*}
\Pr\left[\alpha \in \vY{r}{h}{t}{q}\right] 
&= \Pr\left[\alpha \in \reduce\left(\vZ{r}{h}{q},\frac{t}{t_{\min}}\right)\right]
\\
&= \Pr\left[\alpha \in \vZ{r}{h}{q}\right]\Pr\left[\alpha \in \reduce\left(\vZ{r}{h}{q},\frac{t}{t_{\min}}\right) \Big|\, \alpha \in \vZ{r}{h}{q}\right]
\\
&= t_{\min} \cdot \frac{t}{t_{\min}} = t
\end{align*}
\end{proof}

\probaSecondOrder*
\begin{proof}
We are going to prove a stronger statement, namely, that for every $i$, for every $q$ and $q'$ (potentially $q = q'$) in $L_i$ and every $t$ and $t'$ such that $t = t'$ when $q = q'$,  and every $\alpha \in \mods(q)$ and $\alpha' \in \mods(q')$, and every compatible histories $h$ and $h'$ for $q$ and $q'$, respectively, we have that
\begin{equation}\label{equation:stronger_statement}
\Pr\left[\alpha \in \vY{r}{h}{t}{q} \text{ and } \alpha' \in \vY{r}{h'}{t'}{q'}\right] \leq \frac{tt'}{t^*}.
\end{equation}
where $t^* = t$ if $(q,\alpha) = (q',\alpha')$ and $t^* = h(\lcpn(\paths(\alpha,q),\paths(\alpha',q'))$ otherwise. In addition
if $i$ is even, so $q$ and $q'$ are $\lor$-nodes, then
\begin{equation}\label{equation:stronger_statement_2}
\Pr\left[\alpha \in \vZ{r}{h}{q} \text{ and } \alpha' \in \vZ{r}{h'}{q'}\right] \leq \frac{t_{\min}t'_{\min}}{t^*}.
\end{equation}
where $t_{\min} = \min(h(c) \mid c \in \children(q))$ and $t'_{\min} = \min(h'(c) \mid c \in \children(q'))$ and $t^* = t_{\min}$ if $(q,\alpha) = (q',\alpha')$ and $t^* = h(\lcpn(\paths(\alpha,q),\paths(\alpha',q'))$ otherwise.

The inequalities (\ref{equation:stronger_statement}) and  (\ref{equation:stronger_statement_2}) are straightforward when $(\alpha,q) = (\alpha',q')$ because then $h = h'$ (by compatibility) and $t = t' = t^*$ or $t_{\min} = t'_{\min} = t^*$ and we can use Lemma~\ref{lemma:proba_first_order}. In particular~(\ref{equation:stronger_statement}) holds when $q = q' = 1\text{-sink}$. Now we assume $(\alpha, q) \neq (\alpha',q')$ and proceed by induction on $i$. The base case $i = 0$ holds true by the previous remark (note that neither $q$ nor $q'$ can be the $0$-sink because $\mods(q)$ and $\mods(q')$ must not be empty). 

\begin{itemize}
\item \textbf{Case $i$ odd}. In this case $q$ and $q'$ are decision nodes. Let $q = ite(x,q_1,q_0)$ and $q' = ite(y,q'_1,q'_0)$. Then $h = h_0 \cup h_1 \cup \{q_0 \mapsto t_0, q_1 \mapsto t_1\}$ for some compatible histories $h_0$ and $h_1$ for $q_0$ and $q_1$, respectively, and $t_0$ and $t_1$ such that $t = (\frac{1}{t_0} + \frac{1}{t_1})^{-1}$. Similarly, $h' = h'_0 \cup h'_1 \cup \{q'_0 \mapsto t'_0, q'_1 \mapsto t'_1\}$. Let $b = \alpha(x)$ and $c = \alpha'(y)$. Let also $\beta$ be the restriction of $\alpha$ to $var(\alpha) \setminus \{x\}$ and $\beta'$ be the restriction of $\alpha'$ to $var(\alpha') \setminus \{y\}$. Then 
\begin{align*}
\Pr&[\alpha \in \vY{r}{h}{t}{q} \text{ and } \alpha' \in \vY{r}{h'}{t'}{q'}] 
\\
&= \Pr\left[\beta \in \reduce\left(\vY{r}{h_b}{t_b}{q_b},\frac{t}{t_b}\right), \beta' \in \reduce\left(\vY{r}{h'_c}{t'_c}{q'_c},\frac{t'}{t'_c}\right)  \right.\\
& \quad\qquad \Big|\,\beta \in \vY{r}{h_b}{t_b}{q_b}, \beta' \in \vY{r}{h'_c}{t'_c}{q'_c}\bigg] \Pr\left[\beta \in \vY{r}{h_b}{t_b}{q_b} \text{ and } \beta' \in \vY{r}{h'_c}{t'_c}{q'_c}\right]
\end{align*} 
Now, because $q$ and $q'$ are both in $L_i$, neither is an ancestor of the other and thus the two $\reduce$ are independent: the output of one $\reduce$ does not modify the set fed into the second $\reduce$ nor its output. Thus the probability becomes 
\begin{multline*}
\Pr\left[\beta \in \reduce\left(\vY{r}{h_b}{t_b}{q_b},\frac{t}{t_b}\right)\Big|\,\beta \in \vY{r}{h_b}{t_b}{q_b}\right]
\\
\qquad\qquad\qquad \cdot 
\Pr\left[\beta' \in \reduce\left(\vY{r}{h'_c}{t'_c}{q'_c},\frac{t'}{t'_c}\right) \Big|\,\beta' \in \vY{r}{h'_c}{t'_c}{q'_c}\right]
\\
\cdot 
\Pr\left[\beta \in \vY{r}{h_b}{t_b}{q_b} \text{ and } \beta' \in \vY{r}{h'_c}{t'_c}{q'_c}\right]
\end{multline*} 
which is equal to $\frac{tt'}{t_bt'_c}\Pr\left[\beta \in \vY{r}{h_b}{t_b}{q_b} \text{ and } \beta' \in \vY{r}{h'_c}{t'_c}{q'_c}\right]$. Now, if $(\beta,q_b) = (\beta',q'_c)$ then $t_b = t'_c$ (because $h$ and $h'$ are compatible) and $\Pr\left[\beta \in \vY{r}{h_b}{t_b}{q_b} \text{ and } \beta' \in \vY{r}{h'_c}{t'_c}{q'_c}\right]$ $= \Pr\left[\beta \in \vY{r}{h_b}{t_b}{q_b}\right] = t_b = t'_c$ by Lemma~\ref{lemma:proba_first_order}. So $\Pr[\alpha \in \vY{r}{h}{t}{q} \text{ and } \alpha' \in \vY{r}{h'}{t'}{q'}] \leq  \frac{tt'}{t_b} = \frac{tt'}{h(q_b)}$. By assumption, $(\alpha,q) \neq (\alpha',q')$, if $q \neq q'$ then the two derivation paths $\paths(\alpha,q)$ and $\paths(\alpha',q')$ diverge for the first time at $q_b$, and if $q = q'$ then $x = y$ and $\alpha(x) = 1 - \alpha'(x)$ (because $(\alpha,q) \neq (\alpha',q')$ by assumption). In this case the derivation paths still diverge for the first time at $q_b$: one follows the $0$-edge and the other follows the $1$-edge. So in both cases $\lcpn(\paths(\alpha,q),\paths(\alpha',q')) = q_b$ and we are done. We still have $(\beta,q_b) \neq (\beta',q'_c)$ to consider. In this case the paths $\paths(\beta,q_b)$ and $\paths(\beta',q'_c)$ diverge for the first time at some node $q^*$ below $q_b$ and $q'_c$ so by induction $\Pr\left[\beta \in \vY{r}{h_b}{t_b}{q_b} \text{ and } \beta' \in \vY{r}{h'_c}{t'_c}{q'_c}\right] \leq t_b t'_c/h_b(q^*) = t_b t'_c/h(q^*)$. So $\Pr[\alpha \in \vY{r}{h}{t}{q} \text{ and } \alpha' \in \vY{r}{h'}{t'}{q'}] \leq tt'/h(q^*)$. But $q^*$ is also the first node where $\paths(\alpha,q)$ and $\paths(\alpha',q')$ diverge, hence the result.

\item \textbf{Case $i$ even.} In this case $q$ and $q'$ are both $\lor$-nodes. Say $q = q_1 \lor \dots \lor q_k$ and $q' = q'_1 \lor \dots \lor q'_m$. Then $h = h_1 \cup \dots \cup h_k \cup \{q_1 \mapsto t_1,\dots,q_k \mapsto t_k\}$ and  $h' = h'_1 \cup \dots \cup h'_k \cup \{q'_1 \mapsto t'_1,\dots,q'_m \mapsto t'_m\}$. Let $t_{\min} = \min(t_1,\dots,t_k)$ and $t'_{\min} = \min(t'_1,\dots,t'_m)$.
\begin{align*}
&\Pr\left[\alpha \in \vY{r}{h}{t}{q} \text{ and } \alpha' \in \vY{r}{h'}{t'}{q'}\right] 
\\
= &\Pr\left[\alpha \in \reduce\left(\vZ{r}{h}{q}, \frac{t}{t_{\min}}\right) \text{ and } \alpha' \in \reduce\left(\vZ{r}{h'}{q'}, \frac{t'}{t'_{\min}}\right)\right] 
\\
= &\Pr\left[\alpha \in \reduce\left(\vZ{r}{h}{q}, \frac{t}{t_{\min}}\right) ,\, \alpha' \in \reduce\left(\vZ{r}{h'}{q'}, \frac{t'}{t'_{\min}}\right) \Big| \alpha \in \vZ{r}{h}{q},\, \alpha' \in \vZ{r}{h'}{q'}\right]\\
&\cdot \Pr\left[\alpha \in \vZ{r}{h}{q} \text{ and } \alpha' \in \vZ{r}{h'}{q'}\right]
\end{align*} 
The $\reduce$ events are independent because $q$ and $q'$ both belong to $L_i$ and thus neither in an ancestor of the other: the output of the first $\reduce$ does not influence the output of the second one, even with the knowledge that $\alpha \in \vZ{r}{h}{q}$ and $\alpha' \in \vZ{r}{h'}{q'}$. So the probability becomes
\begin{multline*}
\Pr\left[\alpha \in \reduce\left(\vZ{r}{h}{q}, \frac{t}{t_{\min}}\right)  \Big| \alpha \in \vZ{r}{h}{q}\right] 
\\
\qquad\qquad\qquad\cdot \Pr\left[\alpha' \in \reduce\left(\vZ{r}{h'}{q'}, \frac{t'}{t'_{\min}}\right)  \Big| \alpha' \in \vZ{r}{h'}{q'}\right] 
\\
\cdot \Pr\left[\alpha \in \vZ{r}{h}{q} \text{ and } \alpha' \in \vZ{r}{h'}{q'}\right]
\end{multline*}  
which is $\frac{tt'}{t_{\min}t'_{\min}}\Pr\left[\alpha \in \vZ{r}{h}{q} \text{ and } \alpha' \in \vZ{r}{h'}{q'}\right]$. Now, there are a unique $j$ and $\ell$ such that $\alpha \in \vZ{r}{h}{q}$ only if $\alpha \in \vY{r}{h_j}{t_j}{q_j}$ and $\alpha' \in \vZ{r}{h'}{q'}$ only if $\alpha' \in \vY{r}{h'_\ell}{t'_\ell}{q'_\ell}$. Thus
\begin{align*}
\Pr\left[\alpha \in \right.&\left. \vZ{r}{h}{q} \text{ and } \alpha' \in \vZ{r}{h'}{q'}\right]
\\
&= \Pr\left[\alpha \in \reduce\left( \vY{r}{h_j}{t_j}{q_j}, \frac{t_{\min}}{t_j}\right) \text{ and } \alpha' \in \reduce\left(\vY{r}{h'_\ell}{t'_\ell}{q'_\ell}, \frac{t'_{\min}}{t'_\ell}\right)\right].  
\end{align*} 
$q_j$ and $q'_\ell$ belong to the same layer so with similar arguments we find that 
\begin{align*}
\Pr\left[\alpha \in \vZ{r}{h}{q} \text{ and } \alpha' \in \vZ{r}{h'}{q'}\right]
= \frac{t_{\min}t'_{\min}}{t_j t'_l}\Pr\left[\alpha \in \vY{r}{h_j}{t_j}{q_j} \text{ and } \alpha' \in \vY{r}{h'_\ell}{t'_\ell}{q'_\ell}\right].
\end{align*}
It is possible that $(\alpha,q_j) = (\alpha',q'_\ell)$ but then $q \neq q'$ for otherwise we would have $(\alpha,q) = (\alpha',q')$, against assumption. In the case $(\alpha,q_j) = (\alpha',q'_\ell)$ we use Lemma~\ref{lemma:proba_first_order} and find $\Pr\left[\alpha \in \vY{r}{h_j}{t_j}{q_j} \text{ and } \right.$ $\left.\alpha' \in \vY{r}{h'_\ell}{t'_\ell}{q'_\ell}\right] = \Pr\left[\alpha \in \vY{r}{h_j}{t_j}{q_j}\right] = t_j = t'_\ell$. So $\Pr\left[\alpha \in \vZ{r}{h}{q} \text{ and } \alpha' \in \vZ{r}{h'}{q'}\right] = \frac{t_{\min}t'_{\min}}{t_j}$ and  $\Pr\left[\alpha \in \vY{r}{h}{t}{q} \text{ and } \alpha' \in \vY{r}{h'}{t'}{q'}\right] = \frac{tt'}{t_j}$. When $(\alpha,q_j) = (\alpha',q'_\ell)$, the paths $\paths(\alpha,q)$ and $\paths(\alpha',q')$ diverge for the first time at $q_j = q'_\ell$. So $t_j = h(\lcpn(\paths(\alpha,q),\paths(\alpha',q')))$ and we are done. Now let us assume that $(\alpha,q_j) \neq (\alpha',q'_\ell)$, then we use the induction hypothesis and, denoting $q^* = \lcpn(\paths(\alpha,q_j),\paths(\alpha',q'_\ell))$, we have 
\begin{align*}
\Pr\left[\alpha \in \vZ{r}{h}{q} \text{ and } \alpha' \in \vZ{r}{h'}{q'}\right]
\leq \frac{t_{\min}t'_{\min}}{h(q^*)}
\end{align*}
and
\begin{align*}
\Pr\left[\alpha \in \vY{r}{h}{t}{q} \text{ and } \alpha' \in \vY{r}{h'}{t'}{q'}\right] \leq \frac{tt'}{h(q^*)}.
\end{align*}
When $(\alpha,q_j) \neq (\alpha',q'_\ell)$, the first node where $\paths(\alpha,q)$ and $\paths(\alpha',q')$ diverge is also the first node where $\paths(\alpha,q_j)$ and $\paths(\alpha',q'_\ell)$ diverge, so $q^*$.
\end{itemize}
This finishes the proof of the inductive case.
\end{proof}

\end{document}